\documentclass[aps,prl,twocolumn,superscriptaddress,floatfix,nofootinbib,showpacs,longbibliography,10pt]{revtex4-2}

\usepackage[utf8]{inputenc}
\usepackage[T1]{fontenc}     
\usepackage[british]{babel}  

\usepackage[sc,osf]{mathpazo}
\usepackage{times}

\usepackage[table,dvipsnames]{xcolor}

\usepackage[scaled=0.86]{berasans}

\usepackage{makecell}
\usepackage{comment}
\usepackage{enumitem}
\usepackage{graphicx}
\usepackage[babel]{microtype}

\usepackage{amsmath,amssymb,amsthm,bm,amsfonts,mathrsfs,bbm}

\usepackage{xspace}
\usepackage{pgf,tikz}
\usepackage{multirow}
\usepackage{array}
\usepackage{bigstrut}
\usepackage{braket}
\usepackage{mathtools}
\usepackage[caption=false]{subfig}
\usepackage{natbib}
\usepackage{centernot}

\usepackage[colorlinks=true,citecolor=Magenta,urlcolor=Blue]{hyperref}

\makeatletter
\newcommand{\setword}[2]{%
  \phantomsection
  #1\def\@currentlabel{\unexpanded{#1}}\label{#2}%
}
\makeatother

\newcommand{\be}{\begin{equation}}
\newcommand{\ee}{\end{equation}}
\newcommand{\ba}{\begin{eqnarray}}
\newcommand{\ea}{\end{eqnarray}}

\newtheorem{theorem}{Theorem}
\newtheorem{corollary}{Corollary}

\newtheorem{proposition}{Proposition}
\newtheorem{observation}{Observation}

\newtheorem{lemma}{Lemma}

\def\>{\rangle}
\def\<{\langle}

\begin{document}

\title{Exclusion reshapes the operational manifestation of preparation contextuality}

\author{Pritam Roy}
\email{roy.pritamphy@gmail.com}
\affiliation{S. N. Bose National Centre for Basic Sciences, Block JD, Sector III, Salt Lake, Kolkata 700 106, India}

\author{Thansingh Jankawat}
\email{thansinghjankawat@gmail.com}
\affiliation{S. N. Bose National Centre for Basic Sciences, Block JD, Sector III, Salt Lake, Kolkata 700 106, India}

\author{Ranendu Adhikary}
\email{ronjumath@gmail.com}
\affiliation{Department of Physics and Center for Quantum Frontiers of Research and Technology (QFort),
National Cheng Kung University, Tainan 701, Taiwan}

\author{A. S. Majumdar}
\email{archan@bose.res.in}
\affiliation{S. N. Bose National Centre for Basic Sciences, Block JD, Sector III, Salt Lake, Kolkata 700 106, India}

\begin{abstract}
Replacing the task of retrieval with exclusion changes how preparation contextuality manifests operationally under parity-oblivious constraints, with exclusion showing a quantum advantage where retrieval does not. We introduce the parity-oblivious random exclusion code (POREC) and show that for prime symbol size $m$, classical and preparation-noncontextual encodings provide a tight noncontextual bound. For the first nontrivial case (two digits, three symbols) our derived exact qubit optimum  violates this bound, in contrast  to parity-oblivious retrieval which displays no quantum advantage. This characteristic difference is absent without parity constraints.  For general prime $m$, qubit strategies achieve a quantum-to-noncontextual gap that grows linearly relative to the random exclusion code (REC) gap, exceeding both parity-oblivious retrieval and standard REC. The exact qubit bound  yields a sharp semi-device-independent certification of dimension $d \geq 3$. Our analysis of  noise robustness demonstrates POREC 
to be amenable for experimental implementation on existing prepare-and-measure platforms, establishing parity-oblivious exclusion as  a distinct operational probe of preparation 
contextuality, as well as a practical information processing protocol
with wide applications.
\end{abstract}
\maketitle

{\it Introduction. --} Prepare-and-measure (PM) communication tasks~\cite{Holevo73,NielsenChuang2010,Watrous2018} provide an operational framework to probe and certify fundamental nonclassical features of quantum systems, such as contextuality~\cite{BELL1966,KOCHEN1968,Spekkens2005Contextuality, Pusey2018,SchmidEquv2018,Hazra2026efficient}, violation of  macrorealism~\cite{Leggett_macrorealism1985,Emary_2014} and measurement
incompatibility~\cite{Saha2023}. Moreover, the PM scenario enables quantum information processing  applications without employing the expensive and
operationally fragile resource of quantum entanglement, such as  random access codes~\cite{ambainis2009quantumrandomaccesscodes,Tavakoli2015QRAC, Bera2022},  semi-device-independent   self-testing~\cite{Farkas2019Selftesting, Tavakoli2020NonProj},
randomness generation~\cite{Li2011sDIrandom,Li2012sDIrandom,PPNath2024},
quantum key distribution~\cite{Pawlowski2011SDI}, communication complexity~\cite{Sahacommunicationadvantage2019,Saha_2019,Gupta2023}, and dimension witnessing~\cite{Brunner2013DW,Bowles2014NonlinearDW, deVicente2019DimensionBound,Tavakoli2021PM}.

Within the PM framework, parity-oblivious communication tasks~\cite{Spekkens2009POM,Ambainis2019dPORAC,Hameedi2017dPORAC,Ghorai2018POMOptimal,Roy2024GPOC,Singh_2025,patra2025contextuality} play a central role, constraining the sender to encode information about the input while revealing no multi-digit parities. Violations of their optimal classical bound certify preparation contextuality~\cite{Spekkens2005Contextuality,Pusey2018,SchmidEquv2018,Hazra2026efficient}, a form of generalized noncontextuality rooted in the \emph{Leibniz’s Principle of the Identity of Indiscernibles}~\cite{leibniz1991discourse,spekkens2019indiscernibles}, which requires that operationally indistinguishable preparations be represented identically at the ontological level. Such violations provide a direct operational signature of nonclassicality. Most existing works herein focus on \emph{retrieval} tasks, where the receiver identifies one of the encoded symbols, and is therefore, strongly constrained by parity-obliviousness.

A distinct operational paradigm is provided by \emph{exclusion} tasks, where the receiver outputs a value different from the queried symbol. Unlike retrieval, exclusion is less restrictive, can exhibit quantum advantages in success probability~\cite{Emeriau2022,Bae2026}, and connects directly to contextuality. Conclusive exclusion scenarios inspired by the Pusey-Barrett-Rudolph construction~\cite{Pusey2012RealityQuantumState} typically assuming preparation independence, violate generalized noncontextuality inequalities~\cite{contextualadvantageconclusiveexclusion2026}. Since exclusion reduces to retrieval for binary symbols, the minimal nontrivial setting is the two-digit, $m \ge 3$ regime, relevant for high-rate randomness generation~\cite{Random_number_generator_review2017,Brown2021} and multi-outcome quantum key distribution~\cite{Ding2017HDQKD,Tan2021SecureKeyRates,Brown2021}. 

Random Exclusion Codes (REC)~\cite{Bae2026}, as a counterpart of Random Access Codes (RAC),  differ qualitatively from the latter, though both exhibit the same quantum-classical gap in success probability in the unconstrained setting~\cite{Bae2026}. REC can display a communication dimension advantage under perfect and uniform exclusion, absent in RAC~\cite{Saha2023}, indicating a distinct operational structure. This raises the question of whether such equivalence of quantum-classical gaps persists under parity-oblivious constraints, which forbid any leakage of multi-digit parity information. While such constraints severely limit retrieval, exclusion only requires ruling out the correct value and may still exploit residual information. It is therefore natural to ask: \emph{does replacing retrieval with exclusion alter characteristically how preparation contextuality manifests under parity-oblivious constraints?}

In this Letter, we answer this question in the affirmative by introducing the \emph{parity-oblivious random exclusion code} (POREC), combining the symmetry constraints of parity-oblivious  tasks~\cite{Spekkens2009POM,Chailloux2016PORAC,Hameedi2017dPORAC,
Ambainis2019dPORAC} with the exclusion objective of random exclusion coding~\cite{Bae2026}.
We perform an exact analytic treatment focusing on prime $m$, where $\mathbb{Z}_m$ forms a field and ensures uniform parity classes. We derive a tight classical (equivalently, preparation-noncontextual) bound under full parity-obliviousness for $n$-digit inputs over a prime symbol $m \ge 3$,  showing that all admissible encodings reduce to a single-digit form. In the first nontrivial two-digit, three-symbol case, the exact qubit optimum violates this bound, witnessing preparation contextuality, while retrieval under identical constraints and dimension shows no violation~\cite{Hameedi2017dPORAC}. Structurally, the same additive encoding requires alignment with one component for retrieval, which cannot be maintained across inputs, while exclusion succeeds via anti-aligned qubit measurements that only need to rule out outcomes.

Although REC and RAC share the same quantum-classical gap in the unconstrained setting~\cite{Bae2026}, parity-obliviousness breaks this equivalence: classical encodings reduce to a single-digit form, while quantum encodings retain an additive structure. At the qubit level, retrieval is incompatible with this structure and yields no quantum advantage, whereas exclusion remains compatible and yields a strictly positive advantage unique to POREC. For general prime $m$, the optimal qubit performance with two-outcome projective measurements gives a quantum-to-noncontextual gap that scales linearly relative to the REC gap and exceeds that of both parity-oblivious retrieval and standard REC/RAC. Numerics further reveal an ordered dimensional hierarchy, and the exact qubit bound provides a sharp semi-device-independent witness certifying $d \ge 3$ with 
considerable noise robustness. For $m=3$, the best see-saw value at $d=6$ matches the Navascu\'es--Pironio--Ac\'{\i}n (NPA) upper bound~\cite{Navascues2008NPA} within numerical precision.

{\it Parity-Oblivious Random Exclusion Code.--} Parity-oblivious prepare-and-measure tasks for binary symbols provide a natural framework to probe preparation contextuality and quantum advantages~\cite{Spekkens2009POM,Chailloux2016PORAC,Ghorai2018POMOptimal,
Singh_2025,Roy2024GPOC}. In the retrieval setting, multi-symbol $m$-PORAC was introduced by Ambainis \emph{et al.}~\cite{Ambainis2019dPORAC} for a single parity class and later extended to multi-mask constraints by Hameedi \emph{et al.}~\cite[Supplementary Sec.~I]{Hameedi2017dPORAC}.
We restrict to prime $m$, for which $\mathbb{Z}_m$ forms a field, making all parity classes uniform. Parity-obliviousness imposes equal-weight constraints, and every nonzero mask is surjective, enabling an exact Fourier-analytic derivation of the classical (preparation-noncontextual) bound~\cite{POREC_SM} (see Supplemental Material Sec.~I). 


Operationally, Alice receives a uniformly random string $x = x_1 x_2 \dots x_n \in \mathbb{Z}_m^n$, and Bob receives a uniformly random index $y \in \{1,\dots,n\}$. Alice encodes $x$ into a message (classical or quantum) sent to Bob, subject to \emph{parity-obliviousness}; she may communicate information about $x$ except any information about multi-digit parities.
Formally, for masks $r \in \mathbb{Z}_m^n$ with Hamming weight $\mathrm{wt}(r)\ge 2$ (the number of nonzero components of $r$), defining
\begin{equation}\label{eq:Parity}
\Pi_r(x) = r \cdot x \pmod m,
\end{equation}
ensures that $\Pi_r$ probes only multi-digit linear correlations; allowing $\mathrm{wt}(r)=1$ would forbid all single-digit information and trivialize the task, whereas $\mathrm{wt}(r)\ge 2$ removes only joint information while preserving access to individual digits. Parity-obliviousness then requires that the average encoding over each class $C_k^{(r)} = \{x : r \cdot x \equiv k \pmod m\}$ is independent of $k$, so that no information about any such parity can be inferred.

We now define the figure of merit. Under parity-obliviousness, the success probability for retrieval ($b = x_y$) is~\cite{Hameedi2017dPORAC,Ambainis2019dPORAC}
\begin{equation}
P_{\mathrm{PORAC}} = \frac{1}{n m^n} \sum_x \sum_{y=1}^n P(b = x_y \mid x, y).
\end{equation}
To contrast this with \emph{exclusion}, we consider the objective $b \neq x_y$, with success probability
\begin{equation}\label{eq:POREC_figure_of_merit}
P_{\mathrm{POREC}} = \frac{1}{n m^n} \sum_x \sum_{y=1}^n P(b \neq x_y \mid x, y).
\end{equation}
We denote this $n$-digit, $m$-symbol setting as the $(n,m)$ parity-oblivious random exclusion task.

In the quantum setting, inputs are encoded into states $\rho_x \in \mathcal{D}(\mathbb{C}^d)$, and we define the parity-class averaged states $\rho^{(r)}_k = \tfrac{1}{|C_k^{(r)}|} \sum_{x \in C_k^{(r)}} \rho_x$. Parity-obliviousness then requires
\begin{equation}\label{eq:rho_parity_classes}
\rho^{(r)}_k = \rho^{(r)}_{k'}, \quad \forall\, k,k' \in \mathbb{Z}_m.
\end{equation}
ensuring parity-obliviousness at the level of preparations. The corresponding quantum success probability is
\begin{equation}\label{eq:POREC_figure_of_merit_quantum}
P_{\mathrm{POREC}}^{\mathrm{Q}}(d)
= \frac{1}{n m^n} \sum_{x\in\mathbb{Z}_m^n} \sum_{y=1}^n P(b \neq x_y \mid x, y),
\end{equation}
while for PORAC,
\begin{equation}
P_{\mathrm{PORAC}}^{\mathrm{Q}}(d)
= \frac{1}{n m^n} \sum_{x\in\mathbb{Z}_m^n} \sum_{y=1}^n P(b = x_y \mid x, y).
\end{equation}
The RAC and REC figures of merit have the same form but without the parity-obliviousness constraint in Eq.~(\ref{eq:Parity}).

{\it Classical and noncontextual bounds.--}
We now establish the optimal benchmark attainable by all classical and preparation-noncontextual theories under full parity-obliviousness.

\begin{theorem}[Classical and noncontextual bounds]
For $n$ digits and prime $m$, any parity-oblivious prepare-and-measure protocol can encode information about at most one digit of $x \in \mathbb{Z}_m^n$. Hence, the optimal preparation-noncontextual success probability of POREC is
\begin{equation}
P_{\mathrm{POREC}}^{\mathrm{NC}}
= 1-\frac{n-1}{mn}.
\label{eq:nc_bounds}
\end{equation}
\label{th:Classical_and_noncontextual_bounds}
\end{theorem}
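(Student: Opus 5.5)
We reduce preparation noncontextual models to classical ones: the ontic states play the role of classical messages. In a preparation-noncontextual model, operationally equivalent preparations get the same distribution over ontic states. Parity-obliviousness, Eq.~(\ref{eq:rho_parity_classes}), says the class mixtures $\rho^{(r)}_k$ are indistinguishable for all $k$. Hence the averaged ontic distributions satisfy
\begin{equation*}
\frac{1}{|C_k^{(r)}|}\sum_{x\in C_k^{(r)}}\mu(\lambda|x)
\end{equation*}
independent of $k$, for every $r$ with $\mathrm{wt}(r)\ge 2$. Bob's response is a response function $\xi(b|\lambda,y)$. So it suffices to bound the classical problem: a (possibly randomized) encoding $x\mapsto\lambda$ with the stated constraint, followed by arbitrary decoding.

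Fix $\lambda$ and consider the function $f_\lambda(x)=\mu(\lambda|x)$ on $\mathbb{Z}_m^n$. We take its Fourier transform over $\mathbb{Z}_m^n$ with characters $\omega^{s\cdot x}$, $\omega=e^{2\pi i/m}$. Since $m$ is prime, for $\mathrm{wt}(r)\ge2$ the class sums being equal across $k$ is equivalent to $\hat f_\lambda(tr)=0$ for all $t\in\mathbb{Z}_m^\ast$. This follows because the class sums are the pushforward of $f_\lambda$ under $x\mapsto r\cdot x$, whose own Fourier coefficients are exactly $\hat f_\lambda(tr)$. Every $s$ with $\mathrm{wt}(s)\ge2$ has the form $tr$ with $t=1$ and $r=s$. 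Hence $\hat f_\lambda$ is supported on $\{0\}\cup\{s:\mathrm{wt}(s)=1\}$, so
\begin{equation*}
f_\lambda(x)=c_\lambda+\sum_{j=1}^n g_{\lambda,j}(x_j)
\end{equation*}
with $\sum_{a}g_{\lambda,j}(a)=0$.

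Next we use nonnegativity. The claim is that $f_\lambda\ge0$ forces at most one $g_{\lambda,j}$ to be nonzero when $n\ge2$; this is the ``at most one digit'' statement. Consider two indices $j\neq j'$, fix the other coordinates, and set $A=\min_a g_j(a)<0$, $B=\min_b g_{j'}(b)<0$ if both are nonzero. Then $f$ at the minimizing pair is smaller than at other pairs, but this alone does not contradict positivity. So the claim cannot come from positivity alone, and we instead bound success directly.

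The conditional probability of $x$ given $\lambda$ is proportional to $f_\lambda(x)$. For digit $j$, the marginal of $x_j$ given $\lambda$ is
\begin{equation*}
p_j(a|\lambda)\propto m^{n-1}\big(c_\lambda+g_{\lambda,j}(a)\big),
\end{equation*}
because the other $g$'s sum to zero. Bob's best exclusion succeeds with probability $1-\min_a p_j(a|\lambda)$. Let $\delta_j=m\min_a p_j(a|\lambda)$, so $\min_a p_j=\delta_j/m$. The key lemma is
\begin{equation*}
\sum_j\big(1-\delta_j\big)\le 1,
\end{equation*}
i.e.\ $\sum_j \min_a(c+g_j(a))\ge (n-1)c$. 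This holds by evaluating $f_\lambda\ge0$ at the point $x^\ast=(\arg\min_a g_1,\dots,\arg\min_a g_n)$, which gives $c+\sum_j\min g_j\ge0$. Averaging over $j$, the success probability is at most
\begin{equation*}
1-\frac{1}{nm}\sum_j\delta_j\le 1-\frac{n-1}{nm}.
\end{equation*}

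For achievability, Alice sends $x_1$ and Bob outputs $b\ne x_1$ when $y=1$ (success $1$), and a guess otherwise (success $1-1/m$). This satisfies parity-obliviousness, since for $\mathrm{wt}(r)\ge2$ some $r_j\neq0$ with $j\ge2$ makes $r\cdot x$ uniform given $x_1$. It yields exactly $1-\frac{n-1}{mn}$. The same picture shows that only one digit carries information at optimum.

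The main obstacle is the Fourier step: checking carefully that primality makes every nonzero multiple $tr$ have the same weight, so that the constraints kill exactly all weight-$\ge2$ coefficients.
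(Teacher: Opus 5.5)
Your proof is correct, but after the shared first steps it takes a different route from the paper. Both arguments do three things in common:
\begin{itemize}
\item treat the ontic state (or message) as a classical channel obeying equal class sums;
\item kill all weight-$\ge 2$ Fourier modes to get $c_\lambda+\sum_j g_{\lambda,j}(x_j)$;
\item rely on the same positivity fact, namely nonnegativity at the coordinatewise minimiser $x^\ast$ (the paper's $a_0(M)\ge 0$).
\end{itemize}

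The paper uses this fact to build an explicit decomposition $p(M|x)=p_0\,q_0(M)+\sum_i K_i\,q_i(M|x_i)$. It proves $K_{i,l}$ is independent of $l$, so every parity-oblivious encoding is a convex mixture of channels that each read a single digit. The bound then follows by linearity from the single-digit value $\frac{m-1}{m}+\frac{1}{nm}$. This decomposition is what gives the ``at most one digit'' clause its content.

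You instead bound Bob's best exclusion per ontic state via the posterior marginals $p_j(a|\lambda)=(c_\lambda+g_{\lambda,j}(a))/(mc_\lambda)$, turning $f_\lambda(x^\ast)\ge 0$ into $\sum_j(1-\delta_j)\le 1$. This is shorter and avoids both the $l$-independence step and the Bayes-posterior detour of the paper's step (iii). It also makes explicit how exclusion power may be shared among digits. However, it does not yield the structural statement.

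Your closing sentence, that only one digit carries information at the optimum, is therefore unsupported, and per message it is false. Take the channel $p(M|x)=\tfrac19 f(x-M)$ on $\mathbb{Z}_3^2$ with $f(x)=1+g(x_1)+g(x_2)$ and $g=(-\tfrac12,\tfrac14,\tfrac14)$. It is parity-oblivious, has $\delta_1=\delta_2=\tfrac12$, and attains $5/6$, yet every message is correlated with both digits. The one-digit statement holds only in the paper's convex-mixture sense.

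A minor point: primality is not what kills the weight-$\ge 2$ modes; taking $t=1$ suffices, as you note. Primality enters elsewhere:
\begin{itemize}
\item through the uniform class sizes $|C_k^{(r)}|=m^{n-1}$, which make equal class averages equivalent to equal class sums;
\item in your achievability check, which needs $r_j$ to be invertible.
\end{itemize}
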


\textit{Proof sketch.}—
The argument proceeds in three steps.

\emph{(i) Structural collapse:}—
For a classical encoding $p(M|x)$ with $x\in\mathbb{Z}_m^n$, write
$p(M|x)=\sum_{r\in\mathbb{Z}_m^n}\hat p(M,r)\,\omega^{r\cdot x}$ with
$\omega=e^{2\pi i/m}$. For prime $m$, parity-obliviousness enforces
equality of averages over all classes with $\mathrm{wt}(r)\ge2$, eliminating
all multi-digit Fourier components. The encoding therefore contains only
constant and single-digit terms, reducing to a convex mixture of single-digit
channels (see Supplemental Material Sec.~I~\cite{POREC_SM}).

\emph{(ii) Optimal classical strategy:}—
Within this class, sending a single digit is optimal. For fixed $i$, optimality requires the supports of $q_i(M|l)$ to be disjoint across $l$, so that $M$ identifies $x_i$ deterministically, yielding success $1$ when $y=i$ and $(m-1)/m$ otherwise, with $P_i-P_0=1/(nm)>0$.

\emph{(iii) Noncontextual reduction:}—
Let $\lambda\in\Lambda$ be the ontic state, with preparation distributions $\mu(\lambda|P_x)$ and response functions $\xi(b|y,\lambda)$ such that~\cite{Leifer_2014},
\begin{equation}
P(b|x,y)=\int d\lambda\,\mu(\lambda|P_x)\,\xi(b|y,\lambda),
\end{equation}
where $\int d\lambda\,\mu(\lambda|P_x)=1$, $0\le\xi(b|y,\lambda)\le1$, and $\sum_b \xi(b|y,\lambda)=1$.

For masks $r$ with $\mathrm{wt}(r)\ge 2$, define the coarse-grained preparations
\begin{equation}\label{eq:Parity_class}
P_k^{(r)}=\frac{1}{|C_k^{(r)}|}\sum_{x\in C_k^{(r)}} P_x,\quad
C_k^{(r)}=\{x:\, r\cdot x \equiv k \pmod m\}.
\end{equation}
Parity-obliviousness implies the operational equivalence $P(\cdot|P_k^{(r)},M)=P(\cdot|P_{k'}^{(r)},M)$ for all $M,k,k'$, which, by preparation noncontextuality~\cite{Spekkens2005Contextuality}, yields $\mu(\lambda|P_k^{(r)})=\mu(\lambda|P_{k'}^{(r)})$.

Using linearity in Eq.~\eqref{eq:Parity_class} leads to
\begin{equation}
\sum_{x:\,r\cdot x \equiv k}\mu(\lambda|P_x)
=
\sum_{x:\,r\cdot x \equiv k'}\mu(\lambda|P_x)
\quad \forall\,\lambda,k,k'.
\label{eq:ontic_parity_constraint}
\end{equation}
Using Bayes’ rule with a uniform prior over preparations, $p(P_x|\lambda)=\mu(\lambda|P_x)/\sum_{x'}\mu(\lambda|P_{x'})$. Eq.~\eqref{eq:ontic_parity_constraint} then implies
\begin{equation}
\sum_{x:\,r\cdot x \equiv k} p(P_x|\lambda)
=
\sum_{x:\,r\cdot x \equiv k'} p(P_x|\lambda)
\quad \forall\,\lambda,k,k'.
\end{equation}
Hence, even conditioned on $\lambda$, no multi-digit parity information is available. The posterior $p(P_x|\lambda)$ therefore defines a classical parity-oblivious encoding of the preparation label, which by (i) depends on at most one digit.

Substituting into Eq.~(\ref{eq:POREC_figure_of_merit}),
\begin{equation}
P_{\mathrm{POREC}}=\frac{1}{nm^n}\sum_{x,y}\int d\lambda\,\mu(\lambda|P_x)\,\xi(b\neq x_y|y,\lambda),
\end{equation}
the optimal strategy reduces to encoding a single digit (as in step (ii)), yielding success $1$ when $y=i$ and $(m-1)/m$ otherwise. Hence,
$$P_{\mathrm{POREC}}\le 1-\frac{n-1}{mn}=P^{\mathrm{NC}}_{\mathrm{POREC}}.
$$

Thus, preparation noncontextuality enforces a single-digit reduction; any violation implies a breakdown of Eq.~\eqref{eq:ontic_parity_constraint}, and hence preparation contextuality.

{\it Quantum advantage.--}
We now present the central result for POREC, namely that quantum strategies violate the preparation-noncontextual bound under parity-oblivious constraints. We make this explicit in the minimal nontrivial instance $(n,m)=(2,3)$ by identifying the independent parity masks. For inputs $x=(x_1,x_2)\in\mathbb{Z}_3^2$, parity-obliviousness is imposed for masks $r\in\mathbb{Z}_3^2$ with $\mathrm{wt}(r)\ge 2$. Up to nonzero scalar multiples in $\mathbb{Z}_3$, there are two independent masks, $r=(1,1)$ and $r=(1,2)$, with parity classes:
\begin{align*}
r =(1,1)\Rightarrow\;& \{(0,0),(1,2),(2,1)\},\ \{(0,1),(1,0),(2,2)\},\\
        & \{(0,2),(1,1),(2,0)\},\\[4pt]
r =(1,2)\Rightarrow\;& \{(0,0),(1,1),(2,2)\},\ \{(0,2),(1,0),(2,1)\},\\
        & \{(0,1),(1,2),(2,0)\}.
\end{align*}
Parity-obliviousness requires equal averages over each class (see Supplemental Material Sec.~II~\cite{POREC_SM}), imposing linear constraints on the Bloch vectors and yielding the additive structure below.

\begin{lemma}[Additive Bloch structure]\label{Lemma:additive}
For $n=2$ and prime $m$, any qubit encoding $\{\rho_{x_1x_2}\}$ satisfying full multi-mask parity-obliviousness, written as $\rho_{x_1x_2}=\tfrac{1}{2}(I+\vec n_{x_1x_2}\cdot\vec\sigma)$, admits a decomposition $\vec n_{x_1x_2}=\vec a_{x_1}+\vec b_{x_2}$, where $\vec a_{x_1}$ and $\vec b_{x_2}$ depend only on the respective digits.
\end{lemma}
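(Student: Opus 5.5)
Since the parity-obliviousness conditions of Eq.~\eqref{eq:rho_parity_classes} are linear in the states, and $\rho_x\mapsto\vec n_x$ is affine with the identity component fixed by normalisation, they reduce to linear constraints on the Bloch vectors alone. Concretely, for each mask $r$ with $\mathrm{wt}(r)=2$ we get
\begin{equation*}
\sum_{x:\,r\cdot x\equiv k}\vec n_x=\sum_{x:\,r\cdot x\equiv k'}\vec n_x\quad\forall k,k'.
\end{equation*}
These are real vector-valued relations, and we may treat each Cartesian component $f(x_1,x_2)=(\vec n_{x_1x_2})_j$ separately as a real function on $\mathbb{Z}_m^2$. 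The claim then reduces to a statement about functions: if $f$ has equal sums over every class $C_k^{(r)}$ for all $r=(r_1,r_2)$ with $r_1,r_2\neq0$, then $f(x_1,x_2)=g(x_1)+h(x_2)$.

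To prove this I will reuse the Fourier argument of step (i) in the proof of Theorem~\ref{th:Classical_and_noncontextual_bounds}. Expand
\begin{equation*}
f(x)=\sum_{s\in\mathbb{Z}_m^2}\hat f(s)\,\omega^{s\cdot x}.
\end{equation*}
The class sums for mask $r$ are $S_k^{(r)}=\sum_{x:\,r\cdot x=k}f(x)$. Their discrete Fourier transform in $k$ is
\begin{equation*}
\sum_k S_k^{(r)}\omega^{-tk}=\sum_x f(x)\,\omega^{-t\,r\cdot x}=m^2\hat f(-tr).
\end{equation*}
The sums $S_k^{(r)}$ are independent of $k$ exactly when this transform vanishes for all $t\neq0$, i.e.\ when $\hat f(tr)=0$ for every $t\in\mathbb{Z}_m^{*}$.

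Primality of $m$ now enters. As $r$ ranges over masks with both components nonzero and $t$ over $\mathbb{Z}_m^*$, the products $tr$ cover every $s=(s_1,s_2)$ with $s_1,s_2\neq0$. So all ``two-digit'' Fourier coefficients vanish, and $f$ retains only $\hat f(0,0)$, $\hat f(s_1,0)$ and $\hat f(0,s_2)$. Collecting terms gives
\begin{equation*}
f(x_1,x_2)=\sum_{s_1}\hat f(s_1,0)\,\omega^{s_1x_1}+\sum_{s_2\neq0}\hat f(0,s_2)\,\omega^{s_2x_2},
\end{equation*}
which is real since $f$ is real and $\hat f(-s)=\overline{\hat f(s)}$.

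Doing this for $j=1,2,3$ defines $\vec a_{x_1}$ and $\vec b_{x_2}$ componentwise, with the constant $\hat f(0,0)$ absorbed into $\vec a$; the split is unique only up to a constant vector shift. Equivalently, in real-space terms, $\vec n_{x_1x_2}-\vec n_{x_1 0}-\vec n_{0x_2}+\vec n_{00}=0$. For $m=3$ this can be checked directly from the two mask partitions listed above, which gives a sanity check.

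The main point requiring care is the equivalence between ``class sums independent of $k$'' and ``vanishing of $\hat f$ along the line $\{tr\}$''. It requires that each class $C_k^{(r)}$ have exactly $m$ elements and that $r\cdot x$ run uniformly over $\mathbb{Z}_m$. This is where primality is used: every nonzero $r$ is surjective and every $tr$ with $t\neq0$ is again an admissible mask. I will state this explicitly, since for composite $m$ the lines $\{tr\}$ would not cover all two-digit frequencies. Positivity, $|\vec n_x|\le1$, plays no role here; it only constrains $\vec a,\vec b$ afterwards.
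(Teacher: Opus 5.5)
Your proof is correct and follows the same route as the paper. There too, $k$-independence of the parity-class sums is converted componentwise, via the DFT in $k$, into the vanishing of the mixed Fourier coefficients $\widehat{\vec n}_{p\alpha,p}$ (after normalising masks to $(\alpha,1)$); invertibility in $\mathbb{Z}_m$ then reaches every mode $(s,t)$ with $s,t\neq0$, and inverting the transform gives $\vec a_{x_1}+\vec b_{x_2}$.

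Only cosmetic slips remain. With your expansion convention the class-sum transform is $m^2\hat f(tr)$, not $m^2\hat f(-tr)$; this is immaterial since $t$ runs over all of $\mathbb{Z}_m^{*}$. Also, the equal class size $|C_k^{(r)}|=m$ is what lets you pass from equal class averages to equal class sums, and that is where primality is used. Neither the ``constant iff nonzero DFT coefficients vanish'' step nor the coverage of mixed frequencies (take $t=1$, $r=s$) needs primality.
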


Without loss of generality, fix the gauge $\sum_{x_1}\vec a_{x_1}=\sum_{x_2}\vec b_{x_2}=0$ (Corollary~1, see Supplemental Material Sec.~II~\cite{POREC_SM}). For $(n,m)=(2,3)$, this gives $\vec n_{x_1x_2}=\vec a_{x_1}+\vec b_{x_2}$ with $\vec a_0+\vec a_1+\vec a_2=0$ and $\vec b_0+\vec b_1+\vec b_2=0$, so the nine states arise from two zero-sum triples with no residual joint dependence on $(x_1,x_2)$—the only structure needed for the next theorem.

\begin{theorem}[Exact qubit optimum]\label{th:theorem_exact_qubit_optimum}
For $(n,m)=(2,3)$, the optimal qubit success probability is
\begin{equation}\label{eq:qubit_optimum_m3}
P_{\mathrm{POREC}}^{Q}(2)=\frac{2}{3}+\frac{1}{3\sqrt2},
\end{equation}
exceeding the noncontextual bound $P_{\mathrm{POREC}}^{\mathrm{NC}}=5/6$. The optimum is attained by complementary Pauli measurements, certifying preparation contextuality.
\end{theorem}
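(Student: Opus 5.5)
The plan is to turn the figure of merit into a linear functional of the Bloch data, using the additive structure of Lemma~\ref{Lemma:additive} with the zero-sum gauge. Then bound it by a parallelogram argument inside the Bloch ball, and finally exhibit a Pauli strategy that attains the bound.

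\emph{Step 1 (reduction to retrieval of the queried digit).} Since $P^{Q}_{\mathrm{POREC}}=1-P_{\mathrm{ret}}$ with $P_{\mathrm{ret}}=\frac{1}{18}\sum_{x,y}\mathrm{Tr}(\rho_x E^{y}_{x_y})$, it suffices to minimise $P_{\mathrm{ret}}$. For $y=1$ I write the three-outcome POVM as $E^1_k=\alpha_k I+\vec\beta_k\cdot\vec\sigma$. Its constraints are $\sum_k\alpha_k=1$, $\sum_k\vec\beta_k=0$ and $|\vec\beta_k|\le\alpha_k$. By Lemma~\ref{Lemma:additive} and $\sum_{x_2}\vec b_{x_2}=0$, summing over $x_2$ gives $\sum_{x_2}\rho_{x_1x_2}=\tfrac32(I+\vec a_{x_1}\cdot\vec\sigma)$. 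Hence the $y=1$ contribution is $3(1+A)$ with $A=\sum_k\vec a_k\cdot\vec\beta_k$. Similarly, the $y=2$ measurement $E^2_k=\gamma_k I+\vec\gamma_k\cdot\vec\sigma$ contributes $3(1+B)$ with $B=\sum_k\vec b_k\cdot\vec\gamma_k$. Thus $P_{\mathrm{ret}}=\tfrac13+\tfrac16(A+B)$, and the theorem is equivalent to showing $A+B\ge-\sqrt2$, with equality attainable.

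\emph{Step 2 (bounding each term by a diameter).} Because $\sum_k\vec\beta_k=0$ and $\sum_k|\vec\beta_k|\le\sum_k\alpha_k=1$, the linear form $A$ is minimised by splitting weight $\tfrac12$ between the two most distant $\vec a_k$ with opposite signs. This gives $A\ge-\tfrac12\max_{k,l}|\vec a_k-\vec a_l|$. To justify it, write $A=\sum_k(\vec a_k-\vec c)\cdot\vec\beta_k$ for any centre $\vec c$. Then choose $\vec c$ as the centre of the smallest enclosing ball, or argue directly by extreme points of the feasible set of $\vec\beta$'s. This is the step I expect to need the most care, since a careless centre choice only yields the circumradius rather than half the diameter. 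As a fallback, I will check that the extreme points of $\{\vec\beta_k:\sum\vec\beta_k=0,\ \sum|\vec\beta_k|\le1\}$ are of the form $\vec\beta_k=-\vec\beta_l=\vec u/2$ with $|\vec u|=1$, which gives the diameter bound immediately. Likewise $B\ge-\tfrac12\max_{j,j'}|\vec b_j-\vec b_{j'}|$.

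\emph{Step 3 (parallelogram in the Bloch ball).} Let $\vec u=\vec a_k-\vec a_l$ and $\vec v=\vec b_j-\vec b_{j'}$ be the maximising pairs. The four physical Bloch vectors $\vec n_{kj},\vec n_{lj},\vec n_{kj'},\vec n_{lj'}$ form a parallelogram with sides $\vec u,\vec v$. Its diagonals are $\vec u\pm\vec v=\vec n_{kj}-\vec n_{lj'}$ and $\vec n_{kj'}-\vec n_{lj}$. Each diagonal is a difference of two vectors in the unit ball, so its length is at most $2$. The parallelogram law then gives $|\vec u|^2+|\vec v|^2=\tfrac12(|\vec u+\vec v|^2+|\vec u-\vec v|^2)\le4$, hence $|\vec u|+|\vec v|\le2\sqrt2$. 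Therefore $A+B\ge-\sqrt2$ and $P^{Q}_{\mathrm{POREC}}(2)\le\tfrac23+\tfrac1{3\sqrt2}$.

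\emph{Step 4 (attainability).} I take $\vec a_0=-\vec a_1=\hat x/\sqrt2$, $\vec a_2=0$, and $\vec b_0=-\vec b_1=\hat z/\sqrt2$, $\vec b_2=0$. All $|\vec a_i+\vec b_j|\le1$, so these are valid states. Additive zero-sum encodings are parity-oblivious, because each class of a weight-two mask contains every $x_1$ and every $x_2$ exactly once, so its average is $\tfrac12 I$. For Bob I choose $E^1_0=\tfrac12(I-\sigma_x)$, $E^1_1=\tfrac12(I+\sigma_x)$, $E^1_2=0$, and analogously with $\sigma_z$ for $y=2$. These are the complementary Pauli measurements, anti-aligned with the encoded vectors. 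This gives $A=B=-\tfrac1{\sqrt2}$, so equality holds. Since $\tfrac23+\tfrac1{3\sqrt2}\approx0.902>\tfrac56$, Theorem~\ref{th:Classical_and_noncontextual_bounds} is violated, which certifies preparation contextuality.
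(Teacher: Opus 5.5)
\textbf{Gap in Step 2.} The inequality $A\ge-\tfrac12\max_{k,l}|\vec a_k-\vec a_l|$ is false, and so is your fallback description of the extreme points. The centre argument you set aside is the sharp one. Writing $A=\sum_k(\vec a_k-\vec c)\cdot\vec\beta_k$ gives $A\ge-R_a$, where $R_a=\min_{\vec c}\max_k|\vec a_k-\vec c|$ is the minimal-enclosing-ball radius, and this bound is attained. For an acute triangle, $R_a$ is the circumradius, which is strictly larger than half the diameter.

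A concrete counterexample: let $\vec a_k=\hat u_k/\sqrt2$ with $\{\hat u_k\}$ a planar trine, and take the anti-trine POVM $E^1_k=\tfrac13(I-\hat u_k\cdot\vec\sigma)$. Then $A=-1/\sqrt2\approx-0.707$, while your bound asserts $A\ge-\sqrt3/(2\sqrt2)\approx-0.612$. The corresponding $\vec\beta_k=-\tfrac13\hat u_k$ is an extreme point of $\{\sum_k\vec\beta_k=0,\ \sum_k|\vec\beta_k|\le1\}$ that is not of two-point form. By strict convexity of the norm, any decomposition must keep each direction, and three trine vectors have only a one-dimensional space of linear dependencies.

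This is not a harmless corner case. Combined with $\vec b_0=-\vec b_1=\hat z/\sqrt2$, $\vec b_2=0$ and an anti-aligned $\sigma_z$ measurement, it is a valid parity-oblivious strategy with $A+B=-\sqrt2$, i.e.\ an optimal one; it is the trine--projective saturation the paper itself mentions. Your Step 3 only controls pairs of points, so it bounds half-diameters rather than $R_a+R_b$. As a result, $A+B\ge-\sqrt2$ is not established once three-outcome POVMs are allowed.

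\textbf{How to close it.} Keep $A\ge-R_a$ and $B\ge-R_b$, and replace the two-point parallelogram law by a weighted one. The centre of the minimal enclosing ball is a convex combination $\vec c_a=\sum_k w_k\vec a_k$ of the points with $|\vec a_k-\vec c_a|=R_a$. Likewise $\vec c_b=\sum_l v_l\vec b_l$, with $|\vec b_l-\vec c_b|=R_b$ on the support. These same weights show sharpness: $\vec\beta_k=-w_k(\vec a_k-\vec c_a)/R_a$ is feasible and gives $A=-R_a$. Averaging $|\vec a_k+\vec b_l|^2\le1$ with product weights $w_kv_l$ cancels every cross term and yields $|\vec c_a+\vec c_b|^2+R_a^2+R_b^2\le1$. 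Hence $R_a+R_b\le\sqrt2$ and $A+B\ge-\sqrt2$. Your pair argument is the special case $w=v=(\tfrac12,\tfrac12)$.

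With this replacement, your Steps 1 and 4 complete the proof. The paper instead first asserts (Proposition~1) that optimal preparations are anti-aligned with the effects, with lengths proportional to the weights $t_{x|y}$, and then applies one norm constraint plus Cauchy--Schwarz. The repaired route needs no structural claim about the optimiser. It also never uses $m=3$, so it gives the $(2,m)$ qubit bound $\tfrac{m-1}{m}+\tfrac{1}{m\sqrt2}$ for arbitrary POVMs.
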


\textit{Proof sketch.}—
Using Lemma~\ref{Lemma:additive}, the success probability reduces to 
$P_{\mathrm{POREC}}^{Q}(2)=\frac{2}{3}-\frac{1}{12}F$, where $F$ is linear in the Bloch vectors and POVM elements (see Supplemental Material Sec.~III~\cite{POREC_SM}). Thus, maximizing the success probability is equivalent to minimizing $F$, and by convexity the optimum is attained with extremal POVMs.

The constraint $|\vec a_{x_1}+\vec b_{x_2}|\le 1$, together with the zero-sum relations, ensures that negative overlaps are balanced by positive ones, which sets the tight bound. At the optimum this reduces to $(\lambda t_*)^2+(\mu s_*)^2\le 1$. Minimizing $F$ is then achieved by aligning each Bloch vector anti-parallel to its measurement direction, giving a linear objective under a quadratic constraint. By Cauchy--Schwarz, $|F|\le 2\sqrt{2}$, which is tight and yields Eq.~\eqref{eq:qubit_optimum_m3}.
\\[0.01pt]

\begin{figure}[t]
    \centering
    \includegraphics[width=0.55\columnwidth]{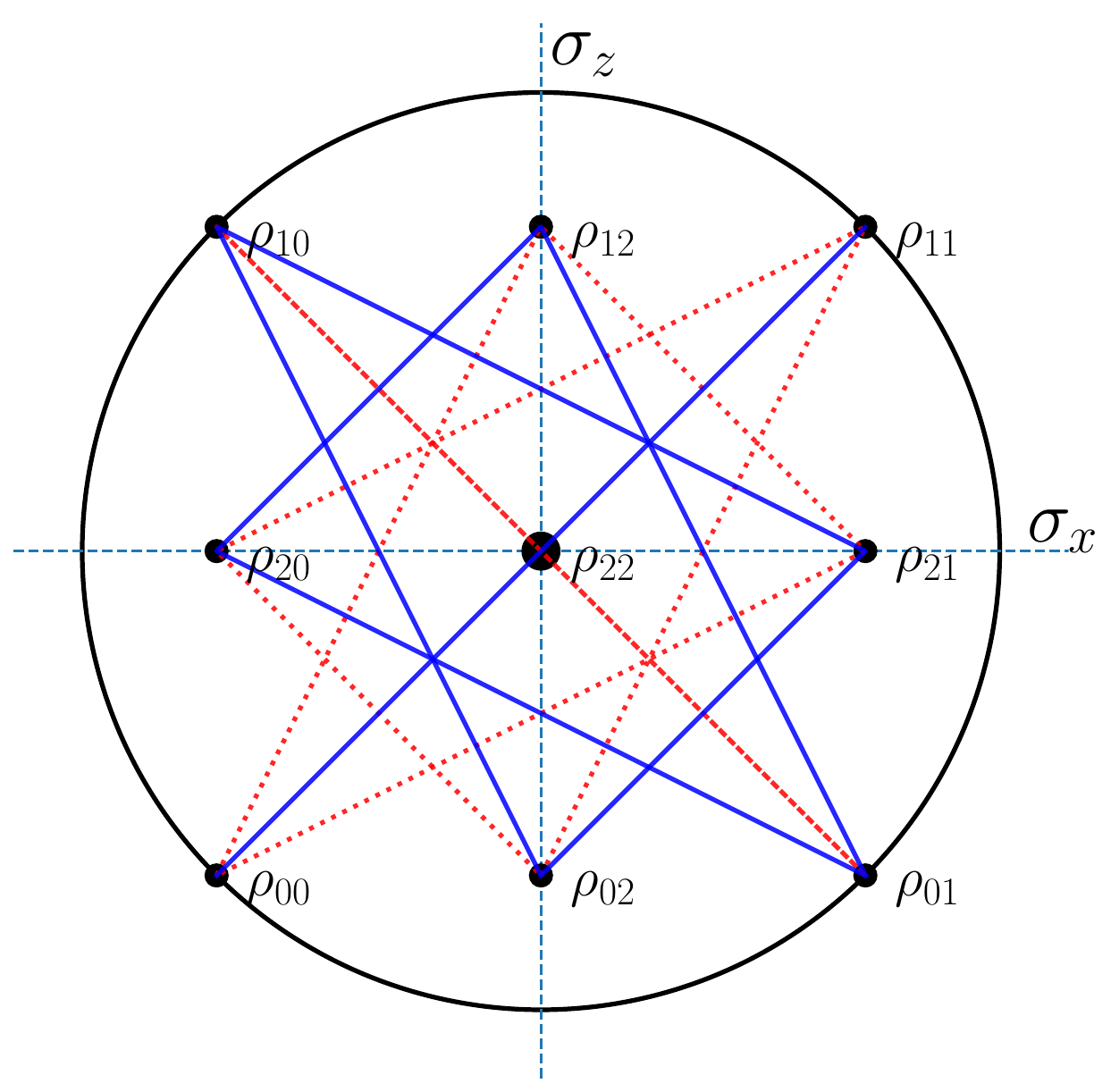}
    \caption{\footnotesize Optimal qubit realization of POREC for $(n,m)=(2,3)$ in the Bloch $x$--$z$ plane. The nine states $\rho_{x_1 x_2}$ form a $3\times3$ grid, with the two digits encoded along orthogonal $\sigma_x$ and $\sigma_z$ directions. Parity classes for masks $(1,1)$ (red dotted) and $(1,2)$ (green solid) each share a common centroid at the maximally mixed state, ensuring full multi-mask parity-obliviousness.}
    \label{fig:bloch_plane_porec}
\end{figure}
An explicit optimal realization is given by the states
$\rho_{x_1 x_2}=\tfrac{1}{2}\!\left(I+\tfrac{f(x_2)}{\sqrt{2}}\sigma_x+\tfrac{f(x_1)}{\sqrt{2}}\sigma_z\right)$
with $f(k)=\delta_{k,1}-\delta_{k,0}$, making the additive structure explicit. The corresponding measurements are two-outcome projective POVMs
$M_{b|y}=\tfrac{1}{2}\bigl(I+(-1)^b\,\hat m_y\cdot\vec{\sigma}\bigr)$ for $b\in\{0,1\}$, with $\hat m_1=\hat z$ and $\hat m_2=\hat x$, and $M_{2|y}=0$. Geometrically, the states lie in the Bloch $x$–$z$ plane, with the two digits encoded along orthogonal directions and measured via $\sigma_z$ and $\sigma_x$. Symmetry fixes a common centroid for all parity classes, ensuring multi-mask parity-obliviousness.

\begin{table}[t]
\caption{\footnotesize
Comparison of optimal classical (RAC, REC) and noncontextual (PORAC, POREC) bounds with quantum values for \((n,m)=(2,3)\).
}
\label{tab:comparison_232}
\centering
\renewcommand{\arraystretch}{1.25}
\begin{ruledtabular}
\begin{tabular}{lccc}
Task & $P_{\mathrm C}, P_{\mathrm{NC}}$ & $P_{\mathrm Q}$ & $\Delta = P_Q - P_{C/NC}$ \\
\hline
\vspace{0.25em}
RAC~\cite{Saha2023}
& $\frac{5}{9}$
& $\frac{4+\sqrt2}{9}$
& $0.0460$
\\
\vspace{0.25em}
REC~\cite{Bae2026}
& $\frac{8}{9}$
& $\frac{7+\sqrt2}{9}$
& $0.0460$
\\
\vspace{1em}
PORAC (single mask)~\cite{Ambainis2019dPORAC}
& $\frac{2}{3}$
& $0.5994$
& $-0.0673$
\\
\vspace{0.25em}
PORAC (multi-mask)~\cite{Hameedi2017dPORAC}
& $\frac{2}{3}$
& $\frac{1}{3}+\frac{1}{3\sqrt2}$
& $-0.0977$
\\
\vspace{0.25em}
\textbf{POREC (this work)}
& $\dfrac{5}{6}$
& $\frac{2}{3}+\frac{1}{3\sqrt2}$
& $\mathbf{0.0690}$
\end{tabular}
\end{ruledtabular}
\renewcommand{\arraystretch}{1}
\end{table}
As summarized in Table~\ref{tab:comparison_232}, parity-obliviousness separates exclusion and retrieval. While PORAC shows no quantum advantage, POREC achieves a strictly positive violation of the noncontextual bound at the qubit level. The corresponding gap is larger by a factor of $3/2$ than in RAC/REC, whereas for PORAC the gap is negative, as no qubit strategy exceeds the noncontextual bound.

\textit{Physical origin of the POREC–PORAC contrast:} Parity-obliviousness reduces all classical and preparation-noncontextual encodings to a single-digit form, while qubit encodings retain the additive structure $\vec n_{x_1x_2}=\vec a_{x_1}+\vec b_{x_2}$. Retrieval requires isolating one component while ignoring the other, which for qubits requires incompatible measurements. This is possible for $m=2$, but for $m\ge3$ the additive structure and finer discrimination make such isolation impossible, so no quantum advantage arises in PORAC. In contrast, exclusion  requires ruling out only the correct value and uses anti-aligned measurements, so a quantum advantage arises in POREC.

This structural difference persists for all prime $m \ge 3$. The additive qubit structure remains compatible with anti-aligned measurements, while retrieval requires identifying a single digit, which becomes increasingly infeasible as $m$ grows. We now show that POREC exhibits a qubit advantage over PORAC for all prime $m \ge 3$, as formalized in the following theorem.
\begin{theorem}[Projective qubit optimum]\label{th:Projective_qubit_optimum}
For $n=2$ and any prime symbol size $m \ge 3$, the optimal success probability over qubit strategies restricted to two-outcome projective measurements is
\begin{equation}
P_{\mathrm{POREC}}^{Q,\mathrm{proj}}(2)
=
\frac{m-1}{m}+\frac{1}{m\sqrt{2}}.
\label{eq:projective_qubit_optimum}
\end{equation}
\end{theorem}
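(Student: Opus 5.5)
By Lemma~\ref{Lemma:additive} and the gauge of Corollary~1, every admissible qubit encoding can be written $\vec n_{x_1x_2}=\vec a_{x_1}+\vec b_{x_2}$ with $\sum_{x_1}\vec a_{x_1}=\sum_{x_2}\vec b_{x_2}=0$. The plan is to compute the figure of merit for an arbitrary two-outcome projective measurement, bound it using only this additive structure, and then exhibit a saturating encoding.

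\emph{Step 1: reduce the figure of merit.} For query $y$, a two-outcome projective measurement is a projector $\Pi_y=\tfrac12(I+\hat m_y\cdot\vec\sigma)$ whose click outputs a label $c_y\in\mathbb Z_m$, while the complement $I-\Pi_y$ outputs $d_y$. If $c_y=d_y$, the measurement is trivial and gives failure probability exactly $1/m$. Take $y=1$. Exclusion fails only when $x_1=c_1$ and the projector clicks, or when $x_1=d_1$ and it does not. Averaging over $x_2$ with $\sum\vec b=0$ gives the failure contributions $\tfrac12(1+\hat m_1\cdot\vec a_{c_1})$ and $\tfrac12(1-\hat m_1\cdot\vec a_{d_1})$, each weighted by $1/m$. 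The case $y=2$ is analogous. With $\vec u=\vec a_{d_1}-\vec a_{c_1}$ and $\vec v=\vec b_{d_2}-\vec b_{c_2}$, this gives
\begin{equation*}
P=\frac{m-1}{m}+\frac{1}{4m}\bigl(\hat m_1\cdot\vec u+\hat m_2\cdot\vec v\bigr).
\end{equation*}
The trivial-measurement case fits this formula with the corresponding vector set to zero.

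\emph{Step 2: bound the linear term.} This is the only nontrivial step. The naive per-vector bounds $|\vec u|,|\vec v|\le2$ are too weak, since they would give $P\le (m-1)/m+1/m$. I will therefore use the joint constraint coming from the additive structure. Both $\vec u+\vec v=\vec n_{d_1d_2}-\vec n_{c_1c_2}$ and $\vec u-\vec v=\vec n_{d_1c_2}-\vec n_{c_1d_2}$ are differences of two Bloch vectors, so each has norm at most $2$. The parallelogram law then gives
\begin{equation*}
|\vec u|^2+|\vec v|^2=\tfrac12\bigl(|\vec u+\vec v|^2+|\vec u-\vec v|^2\bigr)\le4.
\end{equation*}
By Cauchy--Schwarz, $\hat m_1\cdot\vec u+\hat m_2\cdot\vec v\le|\vec u|+|\vec v|\le\sqrt2\,(|\vec u|^2+|\vec v|^2)^{1/2}\le2\sqrt2$. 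This yields $P\le\frac{m-1}{m}+\frac{1}{m\sqrt2}$. The argument never uses the specific value of $m$, only the zero-sum gauge, so it holds uniformly for all prime $m\ge3$.

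\emph{Step 3: attainability.} Choose two labels $c\neq d$. Set $\vec a_{d}=\hat z/\sqrt2$, $\vec a_{c}=-\hat z/\sqrt2$, and all other $\vec a$ equal to zero; set the $\vec b$'s in the same way along $\hat x$. Take $\hat m_1=\hat z$ and $\hat m_2=\hat x$. Then:
\begin{itemize}
\item the zero-sum gauge holds;
\item $|\vec a_i+\vec b_j|\le1$ for all $i,j$, so the states are valid;
\item every inequality in Step 2 is saturated.
\end{itemize}

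It remains to check parity-obliviousness, i.e.\ that the additive form suffices. For any mask $r$ with $r_1,r_2\neq0$, the class $C^{(r)}_k$ is the graph $x_2=r_2^{-1}(k-r_1x_1)$ of a bijection, using that $\mathbb Z_m$ is a field. Each class therefore averages to $\bar a+\bar b=0$, independent of $k$. This reproduces the Theorem~\ref{th:theorem_exact_qubit_optimum} construction at $m=3$ and establishes Eq.~\eqref{eq:projective_qubit_optimum}.
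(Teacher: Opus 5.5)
Your proposal is correct and follows essentially the same route as the paper. You reduce the success probability via the additive structure to $\frac{m-1}{m}+\frac{1}{4m}(\hat m_1\cdot\vec u+\hat m_2\cdot\vec v)$, bound $|\vec u|^2+|\vec v|^2\le 4$ with the parallelogram law on the four states built from the two used labels, apply Cauchy--Schwarz, and saturate with orthogonal Pauli measurements. The differences are minor but welcome: you allow arbitrary output labels $c_y\neq d_y$ (including the trivial case) instead of fixing outcomes $0,1$, you obtain the norm bound from $|\vec u\pm\vec v|\le 2$ rather than by summing four squared Bloch constraints, and you explicitly check that the saturating encoding is physical and parity-oblivious, which the paper leaves implicit.
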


The optimum in Eq.~\eqref{eq:projective_qubit_optimum} is attained by mutually orthogonal measurements with preparations satisfying the parity constraints (see Supplemental Material Sec.~IV~\cite{POREC_SM} for details) and matches the see-saw values in Table~\ref{tab:merged_porec_table_Prob_noise} up to numerical precision ($\sim 10^{-7}$). In particular, for all prime $m \ge 3$, this value exceeds the noncontextual bound, so a qubit strategy witnesses preparation contextuality operationally in POREC.

For $m \ge 5$, whether general $m$-outcome qubit POVMs can exceed this value remains open. Combining Theorem~\ref{th:Projective_qubit_optimum} with the noncontextual bound in Theorem~\ref{th:Classical_and_noncontextual_bounds} [Eq.~\eqref{eq:nc_bounds}] yields the qubit ($d=2$) quantum-to-noncontextual gap $\Delta_{\mathrm{POREC}}$.

For comparison, in REC the classical bound is $1-\frac{1}{m^2}$~\cite{Bae2026}, while the optimal qubit projective value is $1-\frac{2-\sqrt{2}}{m^2}$, giving the gap $\Delta_{\mathrm{REC}}$.

\begin{observation}[Linear enhancement of the gap]
For $n=2$ and prime $m \ge 3$, $\Delta_{\mathrm{POREC}}=\frac{m}{2}\,\Delta_{\mathrm{REC}}$, so the POREC advantage grows linearly with the symbol size.
\end{observation}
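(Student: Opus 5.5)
The observation reduces to direct algebra on the quantities already established, so I would simply compute both gaps and compare them.

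\emph{POREC gap.} Theorem~\ref{th:Classical_and_noncontextual_bounds} with $n=2$ gives $P^{\mathrm{NC}}_{\mathrm{POREC}}=1-\frac{1}{2m}$. Theorem~\ref{th:Projective_qubit_optimum} gives $\frac{m-1}{m}+\frac{1}{m\sqrt2}=1-\frac{1}{m}+\frac{1}{m\sqrt2}$. Subtracting,
\begin{equation*}
\Delta_{\mathrm{POREC}}=\frac{1}{m\sqrt2}-\frac{1}{2m}=\frac{\sqrt2-1}{2m}.
\end{equation*}
As a sanity check, $m=3$ gives $(\sqrt2-1)/6\approx0.0690$, which matches Table~\ref{tab:comparison_232}.

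\emph{REC gap.} I would take the two REC values quoted just before the observation. These are the classical bound $1-\frac{1}{m^2}$ from~\cite{Bae2026} and the qubit projective value $1-\frac{2-\sqrt2}{m^2}$. This gives
\begin{equation*}
\Delta_{\mathrm{REC}}=\frac{1}{m^2}-\frac{2-\sqrt2}{m^2}=\frac{\sqrt2-1}{m^2}.
\end{equation*}
For $m=3$ this is $(\sqrt2-1)/9\approx0.0460$, again consistent with the table.

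\emph{Comparison.} The ratio is
\begin{equation*}
\frac{\Delta_{\mathrm{POREC}}}{\Delta_{\mathrm{REC}}}=\frac{(\sqrt2-1)/(2m)}{(\sqrt2-1)/m^2}=\frac{m}{2},
\end{equation*}
which is the claimed relation. It reproduces the factor $3/2$ stated for $m=3$. The phrase ``grows linearly'' then refers to this ratio. The absolute gap $\Delta_{\mathrm{POREC}}\sim 1/m$ decays more slowly than $\Delta_{\mathrm{REC}}\sim 1/m^2$ by exactly the linear factor $m/2$.

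\emph{Main obstacle.} The algebra is trivial. The only substantive issue is the REC qubit projective value $1-\frac{2-\sqrt2}{m^2}$, which the paper states without proof. If I had to justify it, I would use two complementary Pauli measurements, each excluding one of two outcomes, with the digits encoded on orthogonal axes. This gives a failure probability of $(1-1/\sqrt2)\cdot\frac{2}{m^2}$, averaged over the inputs where the excluded symbol coincides with $x_y$. Since the observation takes this value as given, I would cite it and not rederive it.
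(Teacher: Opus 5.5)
Your proposal is correct and follows the paper's own route. You combine Theorem~\ref{th:Projective_qubit_optimum} with the $n=2$ case of Eq.~\eqref{eq:nc_bounds} to get $\Delta_{\mathrm{POREC}}=(\sqrt2-1)/(2m)$, take the quoted REC values $1-\frac{1}{m^2}$ and $1-\frac{2-\sqrt2}{m^2}$ to get $\Delta_{\mathrm{REC}}=(\sqrt2-1)/m^2$, and take the ratio $m/2$, which matches Table~\ref{tab:comparison_232} at $m=3$. Your Pauli construction establishes only achievability of the REC qubit value, so, as in the paper, its optimality rests on the citation.
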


The REC expression holds for general $m$, while the POREC bound is derived for prime $m$, allowing a direct comparison. In RAC/REC, both classical and quantum strategies may jointly encode $(x_1,x_2)$~\cite{Saha2023,Bae2026}, subject only to a dimension constraint. By contrast, parity-obliviousness in POREC restricts classical encodings to a single digit and enforces an additive structure for qubits, $\vec n_{x_1x_2}=\vec a_{x_1}+\vec b_{x_2}$. This constraint suppresses classical performance more strongly than the quantum one, thereby amplifying the quantum-to-noncontextual gap.

{\it Finite-dimensional hierarchy.--}
See-saw optimization over $100$ random initializations yields the lower bounds reported in Table~\ref{tab:merged_porec_table_Prob_noise}. For prime $m=3,5,7$, the success probability increases monotonically with the Hilbert-space dimension,
$
P_{\mathrm{POREC}}^{Q}(2) < P_{\mathrm{POREC}}^{Q}(3) < P_{\mathrm{POREC}}^{Q}(4)$, revealing a clear dimension hierarchy already at low $d$. By contrast, PORAC shows its first nontrivial quantum violation only at $d=4$, and even then with a comparatively smaller quantum-to-noncontextual gap~\cite[Supplementary Sec.~I]{Hameedi2017dPORAC}.

For $m=3$, the NPA hierarchy~\cite{Navascues2008NPA} gives level-$4$ and level-$5$ upper bounds $P_{\rm NPA}^{(4)} = 0.919119704805156$ and $P_{\rm NPA}^{(5)} = 0.9191192988879445$, while the best see-saw value at $d=6$ is $0.9191192311615828$. The agreement within $\sim 10^{-7}$ indicates finite-dimensional saturation. As $m$ increases, the size of the NPA moment matrix grows rapidly with the hierarchy level; already for $m=3$, reaching this precision requires level $5$, and higher $m$ becomes substantially more demanding.

{\it Noise robustness and experimental implementation.--}
To investigate experimental feasibility of POREC, we consider white noise via the depolarizing model $\rho_x(\omega)=(1-\omega)\rho_x+\omega\,I/d$, with $\omega\in[0,1]$. A quantum advantage persists when the noisy success probability exceeds the noncontextual bound (Theorem~\ref{th:Classical_and_noncontextual_bounds}), defining the threshold $\omega_c^{(d)}=
\frac{P_d^{Q}-P^{NC}}{P_d^{Q}-(1-1/m)},$
where $P_d^{Q}$ is the quantum success probability at dimension $d$ and $P^{NC}\equiv P^{\mathrm{NC}}_{\mathrm{POREC}}$.

Table~\ref{tab:merged_porec_table_Prob_noise} shows that $\omega_c^{(d)}$ is nonzero for all tested parameters and increases with dimension for fixed $m$, meaning higher-dimensional systems tolerate more noise before the quantum advantage disappears. This is experimentally relevant, as a nonzero $\omega_c^{(d)}$ ensures that preparation contextuality remains certifiable under realistic imperfections. Moreover, parity-obliviousness can be operationally verified by testing the indistinguishability of parity-class ensembles defined in Eq.~(\ref{eq:rho_parity_classes}), via the bound $P_{\mathrm{guess}}^{(r)}=\tfrac{1}{2}+\tfrac{1}{4}\mathrm{Tr}\,|\rho^{(r)}_k-\rho^{(r)}_{k'}|\approx\tfrac{1}{2}$ for all $r,\;k\neq k'$. Since both the success probability and this indistinguishability witness are directly estimable from photon-counting statistics in existing prepare-and-measure platforms~\cite{Spekkens2009POM,Hendrych2012DWExp,Sun2016, Mazurek2016Noncontextuality,Zhang_2022,Sun2026}, POREC is within near-term experimental implementation.

\begin{table}[t]
\centering
\caption{\footnotesize Quantum values \((P_{\mathrm{POREC}}^{Q}(d)\equiv P_d^Q)\) and corresponding critical white-noise thresholds \(\omega_c^{(d)}\) for \((2,m)\) POREC.}
\label{tab:merged_porec_table_Prob_noise}
\renewcommand{\arraystretch}{1.2}
\begin{tabular}{c|c|c|c|c|c|c|c}
\hline
$m$ & Classical 
& \multicolumn{2}{c|}{$d=2$} 
& \multicolumn{2}{c|}{$d=3$} 
& \multicolumn{2}{c}{$d=4$} \\
\cline{3-8}
& 
& $P_2^Q$ & $\omega_c^{(2)}$ 
& $P_3^Q$ & $\omega_c^{(3)}$ 
& $P_4^Q$ & $\omega_c^{(4)}$ \\
\hline
3 & 0.833333 
& 0.902369 & 0.2929 
& 0.911306 & 0.3188 
& 0.916664 & 0.3333 \\
5 & 0.900000 
&  0.941421 & 0.2929 
& 0.946786 & 0.3188 
& 0.958579 & 0.3694 \\
7 & 0.928571 
& 0.958158 & 0.2929
& 0.961989 & 0.3187 
& 0.970413 & 0.3693 \\
\hline
\end{tabular}
\renewcommand{\arraystretch}{1}
\end{table}



{\it Witnessing the Hilbert-space dimension via POREC.--}
Certification of Hilbert-space dimension has been studied from various perspectives. These include device-independent approaches based on observed statistics~\cite{Brunner2008Hilbert, Gallego2010DimensionWitness}. Robustness against detection inefficiencies~\cite{DallArno2012,Bowles2014NonlinearDW} and connections to state discrimination have been explored~\cite{Brunner2013DW}.  
Certain other approaches are based on dynamical and graph-theoretic methods~\cite{Wolf2009Dynamics,Wehner2008DW,Ray2020GraphDW}.
 General lower bounds have been derived in the PM scenario, as well~\cite{Sikora2016}. Moreover, polytope-based approaches rooted in generalized noncontextuality provide facet inequalities that can serve as dimension witnesses~\cite{Hazra2026efficient}. There exist several 
experimental demonstrations of dimension witnessing~\cite{Ahrens2012DimensionExperiment,Hendrych2012DWExp,Sun2016,
Sun2026,Bialecki2024DW}. 

In the context of our present work, the exact qubit optimum (Theorem~\ref{th:theorem_exact_qubit_optimum}) shows that POREC defines a semi-device-independent dimension witness~\cite{Brunner2013DW,Bowles2014NonlinearDW,
deVicente2019DimensionBound}, requiring only observed statistics together with the operational parity-obliviousness constraint. Any violation of the qubit bound certifies higher dimension,
$P_{\mathrm{obs}}>P_{\mathrm{Opt}}\Rightarrow \dim(\mathcal H)\ge 3$ and, for the $(2,3)$ instance,
\begin{equation}
P_{\mathrm{POREC}} > 0.902369
\;\Rightarrow\;
d \ge 3.
\end{equation}
POREC thus provides a clear operational route to dimension certification by combining a communication task with preparation contextuality while retaining an analytic classical bound. It also exhibits strong noise robustness, with critical visibilities $\omega_c^{(2)}=0.293$ and $\omega_c^{(3)}=0.319$, improving on the polytope-based results ($\omega_c^{(2)}=0.272$ and $\omega_c^{(3)}=0.296$)~\cite{Hazra2026efficient}. The robustness of POREC is comparable to that of standard linear dimension witnesses, such as the qubit-versus-classical-bit discrimination task~\cite{DallArno2012}, where a critical detection efficiency $\eta_{qc}=1/\sqrt{2}$ corresponds, under white-noise mixing, to $\omega_c \approx 0.293$.

{\it Discussion.—} In this work we have demonstrated how replacing the task of retrieval
 with exclusion characteristically modifies or reshapes the operational manifestation of preparation contextuality under parity-oblivious constraints. Our results show that parity-obliviousness collapses all classical and preparation-noncontextual encodings to a single-digit form, while admissible quantum encodings retain an additive structure for qubits. The operational consequences are task dependent. In retrieval (PORAC)~\cite{Hameedi2017dPORAC}, measurement incompatibility together with the additive coupling prevents accessing one component independently, eliminating any qubit advantage. In contrast, exclusion (POREC) is compatible with this structure, since ruling out outcomes suffices, yielding a qubit advantage already at the two-digit three-symbol $(2,3)$ level. The equivalence between retrieval and exclusion observed in RAC/REC~\cite{Bae2026} is broken for all prime symbol size $m$ in POREC.  Stronger suppression of classical encodings compared to quantum ones leads to an enhanced quantum-to-noncontextual gap in POREC, providing linear advantage with $m$ relative to the REC gap.

Nontrivial preparation constraints give rise to a dimension-dependent hierarchy,  $P_{\mathrm{POREC}}^{Q}(d) < P_{\mathrm{POREC}}^{Q}(d+1)$, in contrast to the existing results based on preparation contextuality~\cite{Spekkens2009POM,Ambainis2019dPORAC,Hameedi2017dPORAC,
Ghorai2018POMOptimal,Roy2024GPOC,Singh_2025,patra2025contextuality}.  The exact qubit value yields a sharp semi-device-independent witness certifying $d \ge 3$ with analytic noise robustness (for $m=3$, the see-saw value at $d=6$ coincides with the NPA upper bound~\cite{Navascues2008NPA} within numerical precision). In essence, exclusion remains nontrivial even when retrieval becomes effectively classical, identifying POREC as a distinct regime with an enhanced quantum-to-noncontextual gap, favorable for
a variety of applications. The protocol is compatible with current prepare-and-measure platforms~\cite{Ahrens2012DimensionExperiment,Hendrych2012DWExp,Sun2016,
Mazurek2016Noncontextuality,Sun2026} and is robust against noise, supporting near-term implementations of preparation contextuality for dimension certification and cryptographic applications such as high-rate multi-outcome randomness generation~\cite{Random_number_generator_review2017,Brown2021, Hazra2026efficient}, and  quantum key distribution~\cite{Ding2017HDQKD,Tan2021SecureKeyRates,Brown2021, Chaturvedi2021characterising}.   

Our results inspire further studies in several directions. 
Since the noncontextual bound holds for all $(n,m)$, the behaviour seen for $(2,m)$ is not incidental, but the simplest instance of a general phenomenon, motivating a systematic study of $(n>2,m\ge 3)$ POREC scenarios where the advantage persists with new scaling laws. On the other hand,
identifying constraints that further tighten the obtained bounds or modify the scaling remains open. A natural extension is to sequential scenarios, where multiple observers act on the same system~\cite{Silva2015,Shiladitya2016sharing,Debarshi2019facets,Sasmal2018Steering,Brown2020}. As in sequential RACs~\cite{Mohan_2019,Miklin2020unsharp,Anwer2021noiserobust,Abhyoudai2023,
Roy2026}, controlled disturbance can enhance information extraction and enable distributed contextuality. Since sequential measurements can enhance device-independent randomness generation~\cite{Bowles2020boundingsetsof}, sequential POREC offers a route to certifying high-rate multi-outcome randomness under parity-oblivious constraints.

{\it Acknowledgments.--} R.A. acknowledges financial support from the National Science and Technology Council, Taiwan (Grants No. 112-2628-M-006-007-MY4, 114-2811-M-006-069-MY2).

\newpage

\onecolumngrid
\renewcommand{\theequation}{S\arabic{equation}}
\setcounter{equation}{0}
\noindent \begin{center}{\Large \bf Supplemental Material}\end{center}
~\vspace{-0.5cm}

\section{Sec I: Proof of (Theorem 1) optimal classical and preparation-noncontextual bound for prime $m$}

In this section, we derive the optimal classical (preparation-noncontextual) in Theorem 1, the success probability for the $(n,m)$ POREC task with $m$ prime:
$$
P_{\mathrm{POREC}}^{\mathrm{cl}} = 1 - \frac{n-1}{mn}.
$$

\textit{Problem definitions and parity-obliviousness:}
Alice receives an input string $x=(x_1,\dots,x_n)\in\mathbb{Z}_m^n$ uniformly at random and encodes it into a classical message $M$ via a channel $p(M\mid x)$. Bob receives an index $y\in\{1,\dots,n\}$ uniformly at random and must output a value $b\neq x_y$.

Parity-obliviousness requires that for every mask $r\in\mathbb{Z}_m^n$ with $\mathrm{wt}(r)\ge 2$ (at least two  $r_i \ne 0$),
\begin{align}
P(r\cdot X = k \mid M)=\tfrac{1}{m}, \quad \forall\,k\in\mathbb{Z}_m.
\label{eq:parity_oblivious}
\end{align}
Define $C_k^{(r)}=\{x:\,r\cdot x\equiv k\!\!\!\pmod{m}\}$. For prime $m$, $|C_k^{(r)}|=m^{n-1}$ for all nonzero $r$ and all $k$. Applying Bayes’ rule (with a uniform prior over $x$) yields
\begin{align}
\sum_{x\in C_k^{(r)}} p(M\mid x)
=
\sum_{x\in C_{k'}^{(r)}} p(M\mid x),
\quad \forall\,k,k'.
\label{eq:equal_weight_classes}
\end{align}

Let $\omega=e^{2\pi i/m}$ and expand
\begin{align}
p(M\mid x)
=
\sum_{r\in\mathbb{Z}_m^n}\hat p(M,r)\,\omega^{r\cdot x}, 
\quad
\hat p(M,r)=\frac{1}{m^n}\sum_x p(M\mid x)\,\omega^{-r\cdot x}.
\end{align}
For any $r$ with $\mathrm{wt}(r)\ge 2$, group the defining sum by parity classes:
\begin{align}
\hat p(M,r)
=
\frac{1}{m^n}\sum_{k\in\mathbb{Z}_m}
\left(\sum_{x\in C_k^{(r)}} p(M\mid x)\right)\omega^{-k}.
\end{align}
By Eq.~\eqref{eq:equal_weight_classes}, the inner sum is independent of $k$, so it factors out:
\begin{align}
\hat p(M,r)
=
C(M,r)\sum_{k=0}^{m-1}\omega^{-k}
=
0.
\end{align}
Thus all multi-digit Fourier modes vanish, leaving only constant and single-digit terms. The encoding must therefore have the form
$$
p(M\mid x)
= f_0(M) + \sum_{i=1}^n \sum_{l=0}^{m-1} t_i^{(l)}(M)\,\delta_{x_i,l}.
$$
The coefficients satisfy $\sum_l t_i^{(l)}(M)=0$ for each $i$ and $M$, which follows from orthogonality of Fourier characters. To obtain a nonnegative decomposition, define
$$
\alpha_i(M) := \min_l t_i^{(l)}(M) \le 0,
$$
and set
$$
a_{i,l}(M) := t_i^{(l)}(M) - \alpha_i(M) \ge 0, \qquad
a_0(M) := f_0(M) + \sum_i \alpha_i(M).
$$
Then
\begin{equation}\label{eq:nonnegative_decomposition}
p(M\mid x) = a_0(M) + \sum_{i,l} a_{i,l}(M)\,\delta_{x_i,l}.
\end{equation}

For each $i$, there exists $l_i^*$ such that $a_{i,l_i^*}(M)=0$. Evaluating at $x=(l_1^*,\dots,l_n^*)$ gives
$$
p(M\mid x)=a_0(M)\ge 0,
$$
so $a_0(M)\ge 0$.
\smallskip
Define
$$
K_{i,l} := \sum_M a_{i,l}(M), \qquad p_0 := \sum_M a_0(M)\ge 0.
$$
Summing over $M$ yields
$$
p_0 + \sum_{i=1}^n K_{i,x_i} = 1 \quad \forall x.
$$
Comparing inputs differing only at coordinate $i$ shows that $K_{i,l}$ is independent of $l$, so we write $K_{i,l}=K_i$. Hence
$$
p_0 + \sum_i K_i = 1.
$$
Defining
$$
q_0(M) := \frac{a_0(M)}{p_0}, \qquad
q_i(M\mid l) := \frac{a_{i,l}(M)}{K_i},
$$
and using Eq.~\eqref{eq:nonnegative_decomposition}, we obtain
$$
p(M\mid x) = p_0\,q_0(M) + \sum_{i=1}^n K_i\,q_i(M\mid x_i).
$$

By linearity,
$$
P_{\mathrm{succ}} = p_0 P_0 + \sum_{i=1}^n K_i P_i,
$$
where $P_0 = \frac{m-1}{m}$. Since $p_0 + \sum_i K_i = 1$, this is a convex combination, and therefore
$$
P_{\mathrm{succ}} \le \max\{P_0, \max_i P_i\}.
$$

For a fixed $i$,
$$
P_i = \frac{1}{n} P(\mathrm{succ}\mid y=i) + \frac{n-1}{n}\cdot\frac{m-1}{m}.
$$
The second term is fixed. The first term is maximised when $M$ determines $x_i$ uniquely, which requires disjoint supports of $q_i(\cdot\mid l)$ for different $l$. Any overlap reduces the success probability. Thus the optimal choice is deterministic, $M=x_i$, giving
$$
P_i = \frac{1}{n} + \frac{n-1}{n}\cdot\frac{m-1}{m}
= \frac{m-1}{m} + \frac{1}{nm}.
$$

Since $P_i > P_0$, the optimal value satisfies
$$
P_{\mathrm{succ}} \le 1 - \frac{n-1}{mn}.
$$
\emph{Achievability:}
Alice sends $M=x_1$. If $y=1$, Bob outputs any $b\neq M$; if $y\neq 1$, he outputs a fixed $b_0\in\mathbb{Z}_m$. This achieves
\begin{equation}\label{eq:P_classical}
    P_{\mathrm{POREC}}^{\mathrm{cl}}
= \frac{1}{n}\cdot 1 + \frac{n-1}{n}\cdot\frac{m-1}{m}
= 1 - \frac{n-1}{mn}.
\end{equation}

As discussed in the main text, this classical bound coincides with the preparation-noncontextual bound $P_{\mathrm{POREC}}^{\mathrm{NC}}$.
\section{Sec II: Proof of (Lemma 1) Additive structure under parity-obliviousness and zero sum gauge}
\label{app:additive}

In this section, we prove that full parity-obliviousness enforces an additive structure on the preparation Bloch vectors. We consider $(2, m)$  scenario $(x_1,x_2)\in\mathbb{Z}_m^2$, where $m$ is prime, and a family of qubit states $\{\rho_{x_1 x_2}\}$. Writing
$$
\rho_{x_1 x_2} = \tfrac{1}{2}\bigl(I + \vec n_{x_1 x_2}\cdot \vec\sigma\bigr),
$$
we analyze the constraints imposed on the Bloch vectors $\vec n_{x_1 x_2}\in\mathbb{R}^3$ by parity-obliviousness.

\medskip
\subsection{Proof of Lemma 1 (Additive Bloch structure)}
Under full parity-obliviousness, the Bloch vectors admit a decomposition
$$
\vec n_{x_1 x_2} = \vec a_{x_1} + \vec b_{x_2}
\qquad \forall\, x_1,x_2 \in \mathbb{Z}_m.
$$

\begin{proof}
Full parity-obliviousness is defined with respect to all masks $r=(r_1,r_2)\in\mathbb{Z}_m^2$ with $r_1,r_2\neq 0$, through the parity function $\Pi_r(x_1,x_2)=r_1 x_1 + r_2 x_2 \pmod m$.
\smallskip
The condition requires that, for every such mask $r$, the averaged states over each parity class
$$
\rho^{(r)}_k
:= \frac{1}{m}\sum_{x_1,x_2:\, r_1 x_1 + r_2 x_2 = k} \rho_{x_1 x_2}
$$
are independent of $k$.

\smallskip
Since $m$ is prime, every nonzero element of $\mathbb{Z}_m$ is invertible; in particular, $r_2 \neq 0$ implies $r_2^{-1}$ exists, and multiplying the parity equation by $r_2^{-1}$ yields the equivalent form $r_2^{-1} r_1 x_1 + x_2 = r_2^{-1} k$. Defining $\alpha := r_1 r_2^{-1} \in \mathbb{Z}_m^\times,$ this becomes $
\alpha x_1 + x_2 = k',$ where $k' = r_2^{-1}k$ is simply a relabelling of the parity class. Since parity-obliviousness requires independence for all $k$, this relabelling does not affect the condition.

\smallskip
Thus, without loss of generality, it suffices to impose parity-obliviousness for all masks of the form $(\alpha,1)$ with $\alpha \in \mathbb{Z}_m^\times$. For each $\alpha$, the parity classes are therefore described by $\alpha x_1 + x_2 = k \quad (\mathrm{mod}\ m),$ which can be written as $x_2 = k - \alpha x_1.$

Parity-obliviousness then requires that
$$
\rho^{(\alpha)}_k
:= \frac{1}{m}\sum_{x_1 \in \mathbb{Z}_m} \rho_{x_1,\,k-\alpha x_1}
$$
is independent of $k$.

Passing to Bloch vectors using linearity of the map $\rho \mapsto \vec n$, this is equivalent to requiring that
$$
f(k) := \sum_{x_1 \in \mathbb{Z}_m} \vec n_{x_1,\,k-\alpha x_1}
$$
is independent of $k$ for every $\alpha \in \mathbb{Z}_m^\times$.

A function on $\mathbb{Z}_m$ is constant if and only if all its nonzero discrete Fourier coefficients vanish. Applying this component-wise to $f(k)$, we obtain that for every $p \neq 0$,
$$
\sum_{k \in \mathbb{Z}_m} f(k)\,\omega^{pk} = 0,
\qquad \omega := e^{2\pi i/m}.
$$
Substituting the definition of $f(k)$ gives
$$
\sum_{k} \sum_{x_1} \vec n_{x_1,\,k-\alpha x_1}\,\omega^{pk} = 0.
$$

Introduce the change of variables $x_2 = k - \alpha x_1$, equivalently $k = x_2 + \alpha x_1$, which defines a bijection on $\mathbb{Z}_m^2$. The sum becomes
$$
\sum_{x_1,x_2} \vec n_{x_1 x_2}\,\omega^{p(x_2 + \alpha x_1)}
= \sum_{x_1,x_2} \vec n_{x_1 x_2}\,\omega^{(p\alpha)x_1 + p x_2}.
$$

Define the discrete Fourier transform
$$
\widehat{\vec n}_{s,t}
:= \sum_{x_1,x_2 \in \mathbb{Z}_m} \vec n_{x_1 x_2}\,\omega^{s x_1 + t x_2}.
$$
Then parity-obliviousness implies
\begin{equation}\label{eq:fourier_constraint}
\widehat{\vec n}_{p\alpha,\,p} = 0
\qquad \forall\, \alpha \in \mathbb{Z}_m^\times,\; p \neq 0.
\end{equation}

Let $s,t \neq 0$. Since $m$ is prime, $t$ is invertible, so set $p = t$ and $\alpha = s t^{-1}$, giving $p\alpha = s$, and hence by Eq.~\eqref{eq:fourier_constraint}, $\widehat{\vec n}_{s,t} = 0$.

\smallskip
Thus, all mixed Fourier modes vanish. The inverse transform,
$$
\vec n_{x_1 x_2}
= \frac{1}{m^2}\sum_{s,t} \widehat{\vec n}_{s,t}\,\omega^{-(s x_1 + t x_2)}.
$$
Only modes with $s=0$ or $t=0$ surviving,
$$
\vec n_{x_1 x_2}
= \frac{1}{m^2}\sum_{s} \widehat{\vec n}_{s,0}\,\omega^{-s x_1}
+ \frac{1}{m^2}\sum_{t} \widehat{\vec n}_{0,t}\,\omega^{-t x_2}
- \frac{\widehat{\vec n}_{0,0}}{m^2}.
$$

Define
$$
\vec a_{x_1}
:= \frac{1}{m^2}\sum_{s} \widehat{\vec n}_{s,0}\,\omega^{-s x_1}
- \frac{\widehat{\vec n}_{0,0}}{m^2},
\qquad
\vec b_{x_2}
:= \frac{1}{m^2}\sum_{t} \widehat{\vec n}_{0,t}\,\omega^{-t x_2}.
$$
Then

\begin{equation}\label{eq:additive_structure}
    \vec n_{x_1 x_2} = \vec a_{x_1} + \vec b_{x_2},
\end{equation}
which completes the proof.
\end{proof}

Together, these constraints eliminate all joint dependence on $(x_1,x_2)$ and enforce the additive structure. This is precisely the decomposition in Eq.~\eqref{eq:additive_structure}.

\smallskip
\paragraph{\textbf{Remark} (Role of multiple masks and primality):}
A single mask only constrains averages along one family of parallel hyperplanes, so it removes only part of the mixed Fourier structure; correlations that depend jointly on $(x_1,x_2)$ can still remain. To eliminate all such mixed terms, one needs the full set of masks. In particular, for any mixed mode $(s,t)$, the mask with slope $\alpha = s t^{-1}$ removes it. This construction relies on the fact that every nonzero $t$ has an inverse, which holds precisely when $m$ is prime. If $m$ is composite, some elements are not invertible, so certain mixed modes cannot be accessed by any mask, and the additive decomposition may fail.

\medskip
\begin{corollary}[Zero-sum gauge]
\label{cor:zero_sum}
The decomposition can always be chosen such that
\begin{equation}\label{eq:zero_sum}
\sum_{x_1} \vec a_{x_1} = 0,
\qquad
\sum_{x_2} \vec b_{x_2} = 0.
\end{equation}
\end{corollary}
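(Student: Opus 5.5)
The plan is to exploit the only freedom in Lemma~\ref{Lemma:additive}: the decomposition $\vec n_{x_1x_2}=\vec a_{x_1}+\vec b_{x_2}$ is not unique. For any constant vector $\vec c\in\mathbb{R}^3$, the replacement $\vec a_{x_1}\mapsto\vec a_{x_1}+\vec c$, $\vec b_{x_2}\mapsto\vec b_{x_2}-\vec c$ leaves every $\vec n_{x_1x_2}$ unchanged. Conversely, any two additive decompositions of the same family differ by exactly such a shift. To see this, compare two decompositions at fixed $x_2$ while varying $x_1$, and then at fixed $x_1$ while varying $x_2$.

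\emph{Step 1.} I will fix the first sum. I set $\vec c:=-\tfrac1m\sum_{x_1}\vec a_{x_1}$, so the shifted $\vec a'_{x_1}=\vec a_{x_1}+\vec c$ satisfies $\sum_{x_1}\vec a'_{x_1}=0$. The compensating shift goes into $\vec b'_{x_2}=\vec b_{x_2}-\vec c$.

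\emph{Step 2.} I will identify what the second sum then equals. Summing $\vec n_{x_1x_2}=\vec a'_{x_1}+\vec b'_{x_2}$ over all $(x_1,x_2)\in\mathbb{Z}_m^2$ gives $m\sum_{x_2}\vec b'_{x_2}=\sum_{x_1,x_2}\vec n_{x_1x_2}=\widehat{\vec n}_{0,0}$. With the explicit definitions in the proof of Lemma~\ref{Lemma:additive} one reads off the same thing directly: $\sum_{x_2}\vec b_{x_2}=\widehat{\vec n}_{0,0}/m$. Hence $\sum_{x_2}\vec b'_{x_2}=0$ holds if and only if the global centroid $\tfrac{1}{m^2}\sum_x\vec n_x$ vanishes, i.e.\ the uniform average of all preparations is $I/2$. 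This quantity is gauge invariant, so no further choice of $\vec c$ can change it.

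\emph{Step 3, the main obstacle.} I need to establish that the centroid vanishes. It is not forced by the parity constraints alone, since masks of weight $\ge2$ only kill the modes $\widehat{\vec n}_{s,t}$ with $s,t\neq0$. My first attempt will be to note that every parity-class average $\rho^{(r)}_k$ equals the global average $\tfrac1{m^2}\sum_x\rho_x$. I will then argue that in the setting where the corollary is used, namely the optimal realisations as in Fig.~\ref{fig:bloch_plane_porec}, this common centroid is the maximally mixed state. In that case $\widehat{\vec n}_{0,0}=0$ and both zero-sum conditions hold simultaneously.

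If a fully general justification is needed, I will show the following for the figure of merit. The term contributed by $\widehat{\vec n}_{0,0}$ enters linearly and uniformly in $x$, so it only adds an $x$-independent bias to each measurement. That bias is absorbed by relabelling the measurement, so the centroid can be assumed zero at the optimum. This is the step I expect to require the most care. I would state explicitly that the corollary, as used, holds precisely under the vanishing-centroid condition.
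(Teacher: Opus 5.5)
Your Steps 1 and 2 follow the paper's argument. You shift by a constant to get $\sum_{x_1}\vec a_{x_1}=0$, then sum over all inputs to find $\sum_{x_2}\vec b_{x_2}=\tfrac1m\sum_x\vec n_x$. You add an observation the paper does not make explicit: any two additive decompositions differ by such a shift, so the second condition is equivalent to the gauge-invariant condition $\sum_x\vec n_x=0$. This is correct and sharper than the paper's wording. It also shows the corollary is false if read as a statement about a given ensemble, e.g.\ $\vec n_x=\hat z$ for all $x$, which is trivially parity-oblivious.

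The gap is Step~3, and you do not close it.

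\emph{First route.} Appealing to the optimal realization of Fig.~\ref{fig:bloch_plane_porec} is circular. The corollary feeds into Proposition~\ref{Proposition1} and the upper bound of Theorem~\ref{th:theorem_exact_qubit_optimum}, which must hold for every admissible ensemble.

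\emph{Second route.} This is in substance the paper's closing step (``only differences matter, so shift all preparations by a constant Bloch vector''), and it has the same hole. Let $\vec g:=\tfrac1{m^2}\sum_x\vec n_x$. The centroid does drop out of the figure of merit, but exactly rather than by relabelling: each $b$ occurs as $x_y$ equally often and $\sum_b M_{b|y}=I$, so its contribution is $\tfrac m2\operatorname{Tr}[\vec g\cdot\vec\sigma]=0$. Translation also preserves parity-obliviousness. What translation does not preserve is positivity, $|\vec n_x|\le1$.

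\emph{Counterexample.} For $m=3$, take $\vec n_{x_1x_2}=\hat z$ if $x_1=0$ and $-\hat z$ otherwise. It is additive and parity-oblivious, since each weight-two class contains each value of $x_1$ exactly once. Its centroid is $\vec g=-\hat z/3$, so the translate has $\vec n'_{0x_2}=\tfrac43\hat z$, outside the Bloch ball. The translate is the only zero-centroid family with the same pairwise differences.

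Hence ``the centroid can be assumed zero at the optimum'' does not follow from the objective being centroid-independent. You would need a separate argument that forcing $\vec g=0$ cannot lower the optimum, and neither you nor the paper gives one.

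What is actually established is your conditional version: both zero sums hold iff the centroid is $I/2$. That suffices for Theorem~\ref{th:Projective_qubit_optimum}, whose parallelogram bound uses only $|\vec n_{x_1x_2}|\le1$ and the gauge-invariant differences $\vec a_0-\vec a_1$ and $\vec b_0-\vec b_1$. It does not suffice for Proposition~\ref{Proposition1}, which uses both zero sums explicitly.
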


\begin{proof}
The decomposition is not unique, since it is invariant under the transformation
$$
\vec a_{x_1} \to \vec a_{x_1} - \vec c,
\qquad
\vec b_{x_2} \to \vec b_{x_2} + \vec c,
$$
for any fixed vector $\vec c$. Choosing
$\vec c = \frac{1}{m}\sum_{x_1} \vec a_{x_1}$ ensures that $\sum_{x_1} \vec a_{x_1} = 0$.

\smallskip
Summing the decomposition over all inputs gives
$$
\sum_{x_1,x_2} \vec n_{x_1 x_2}
= m \sum_{x_1} \vec a_{x_1} + m \sum_{x_2} \vec b_{x_2}.
$$
Hence,
$$
\sum_{x_2} \vec b_{x_2}
= \frac{1}{m}\sum_{x_1,x_2} \vec n_{x_1 x_2}.
$$

For our figure of merit, only differences between states are relevant (we will see details in the next section). Therefore, we can shift all preparations by a constant Bloch vector without affecting either parity-obliviousness or the success probability. In particular, we may choose the centroid of the ensemble to be the maximally mixed state:
$$
\frac{1}{m^2}\sum_{x_1,x_2} \rho_{x_1 x_2} = \frac{I}{2},
$$
which is equivalent to
$$
\sum_{x_1,x_2} \vec n_{x_1 x_2} = 0.
$$
Under this choice, it follows that
$$
\sum_{x_2} \vec b_{x_2} = 0,
$$
as required by Eq.~\eqref{eq:zero_sum}, completing the proof.
\end{proof}

\textbf{$\boldsymbol{(2,3)}$ case.}
For $m=3$, full parity-obliviousness involves two independent masks, $(1,1)$ and $(1,2)$, each defining a partition of $\mathbb{Z}_3^2$ into three parity classes.

\emph{Mask $(1,1)$:} The parity $k = x_1 + x_2 \ (\mathrm{mod}\ 3)$ yields

\begin{align*}
 k=0:\{(0,0),(1,2),(2,1)\},\; k=1:\{(0,1),(1,0),(2,2)\},\; k=2:\{(0,2),(1,1),(2,0)\}.   
\end{align*}

\smallskip
which imply
$$
\vec n_{00}+\vec n_{12}+\vec n_{21}
=
\vec n_{01}+\vec n_{10}+\vec n_{22}
=
\vec n_{02}+\vec n_{11}+\vec n_{20}.
$$

\emph{Mask $(1,2)$:}
The parity $k = x_1 + 2x_2 \ (\mathrm{mod}\ 3)$ yields
\begin{align*}
k=0 : \{(0,0),(1,1),(2,2)\},\;
k=1 : \{(0,2),(1,0),(2,1)\},\;
k=2 : \{(0,1),(1,2),(2,0)\},
\end{align*}
which imply
$$
\vec n_{00}+\vec n_{11}+\vec n_{22}
=
\vec n_{02}+\vec n_{10}+\vec n_{21}
=
\vec n_{01}+\vec n_{12}+\vec n_{20}.
$$

\medskip
Taken together, these two independent sets of constraints eliminate all joint dependence on $(x_1,x_2)$ and enforce the additive structure
$$
\vec n_{x_1 x_2} = \vec a_{x_1} + \vec b_{x_2}.
$$

Imposing the zero-sum gauge,
$$
\vec a_0 + \vec a_1 + \vec a_2 = 0,
\qquad
\vec b_0 + \vec b_1 + \vec b_2 = 0,
$$
each family $\{\vec a_{x_1}\}$ and $\{\vec b_{x_2}\}$ forms a closed triangle in Bloch space. The nine Bloch vectors are then obtained as pairwise sums
$$
\vec n_{x_1 x_2} = \vec a_{x_1} + \vec b_{x_2},
$$
so the full encoding is generated by translating one triangle by the vertices of the other. This geometric picture makes explicit that all correlations arise from independent single-digit contributions, with no residual joint structure.
\section{Sec III: Proof of (Theorem 2) qubit upper bound for the $(2,3)$ POREC/PORAC task and saturating strategy}\label{app:qubit_bound}

In this section, we derive the tight upper bound for qubit ($d=2$) strategies in the $(2,3)$ parity-oblivious exclusion task. The same derivation applies, with a simple change of alignment, to the corresponding parity-oblivious retrieval (PORAC) task. As described in the main text, Alice receives an input string $x=(x_1,x_2)\in \mathbb{Z}_3^2$, while Bob receives an index $y\in\{1,2\}$. In the PORAC task, Bob is required to output the value of the selected digit, i.e., $b = x_y$. In the POREC task, his goal is instead to output a symbol $b$ that differs from $x_y$.

\smallskip
The corresponding average success probabilities for qubit ($d=2$) strategies are
\begin{equation}\label{eq:PORAC_success}
    P_{\mathrm{PORAC}}^{\mathrm{Q}}(d=2)
=
\frac{1}{2\cdot 3^2}
\sum_{x\in\mathbb{Z}_3^2}
\sum_{y=1}^2
P(b=x_y\mid x,y),
\end{equation}

and
$$
P_{\mathrm{POREC}}^{\mathrm{Q}}(d=2)
=
\frac{1}{2\cdot 3^2}
\sum_{x\in\mathbb{Z}_3^2}
\sum_{y=1}^2
P(b\neq x_y\mid x,y).
$$

Using the Born rule $P(b=x_y\mid x,y)=\operatorname{Tr}(\rho_x M_{x_y|y}),$
and noting that exclusion succeeds exactly when $b\neq x_y$, we rewrite
\begin{equation}\label{eq:POREC_reduced}
P_{\mathrm{POREC}}^{\mathrm{Q}}(d=2)
=
1-\frac{1}{18}
\sum_{x_1,x_2=0}^{2}
\Bigl[
\operatorname{Tr}(\rho_{x_1x_2}M_{x_1|1})
+
\operatorname{Tr}(\rho_{x_1x_2}M_{x_2|2})
\Bigr].
\end{equation}

Hence, introduce the marginal operators,
$$
R_{x_1}:=\sum_{x_2}\rho_{x_1x_2},
\qquad
C_{x_2}:=\sum_{x_1}\rho_{x_1x_2}.
$$
Rewriting Eq.~\eqref{eq:POREC_reduced} in terms of these marginals, we obtain
\begin{equation}\label{eq:POREC_marginal}
P_{\mathrm{POREC}}^{\mathrm{Q}}(d=2)
=
1-\frac{1}{18}
\Bigl[
\sum_{x_1}\operatorname{Tr}(R_{x_1}M_{x_1|1})
+
\sum_{x_2}\operatorname{Tr}(C_{x_2}M_{x_2|2})
\Bigr].
\end{equation}

Using the completeness relation $M_{2|y}=I-M_{0|y}-M_{1|y}$ together with
$\operatorname{Tr}(R_2)=\operatorname{Tr}(C_2)=3$, Eq.~\eqref{eq:POREC_marginal} can be rewritten as
\begin{equation}\label{eq:POREC_Sxy}
P_{\mathrm{POREC}}^{\mathrm{Q}}(d=2)
=
\frac{2}{3}
-\frac{1}{18}
\sum_{\substack{x=0,1\\ y=1,2}}
\operatorname{Tr}\bigl[M_{x|y}\,S_{x|y}\bigr],
\end{equation}
where $S_{x|1}=R_x-R_2$ and $S_{x|2}=C_x-C_2$.

\smallskip
We now use the additive Bloch decomposition derived in Sec.~II,
$$
\rho_{x_1x_2}
=
\frac{1}{2}\bigl(I+(\vec a_{x_1}+\vec b_{x_2})\cdot\vec\sigma\bigr),
\qquad
\sum_{x_1}\vec a_{x_1}=0,\quad
\sum_{x_2}\vec b_{x_2}=0,
$$
which gives
$$
R_{x_1}-R_2=\frac{3}{2}(\vec a_{x_1}-\vec a_2)\cdot\vec\sigma,
\qquad
C_{x_2}-C_2=\frac{3}{2}(\vec b_{x_2}-\vec b_2)\cdot\vec\sigma.
$$

We parametrize the measurement operators as
\begin{equation}\label{eq:Measurement_constraint}
M_{k|y}
=
\frac{t_{k|y}}{2}(I+\hat m_{k|y}\cdot\vec\sigma),
\qquad
\sum_k t_{k|y}=2,\quad
\sum_k t_{k|y}\hat m_{k|y}=0,
\end{equation}
and use the identity $\operatorname{Tr}[M(\vec v\cdot\vec\sigma)]=t\,\hat m\cdot\vec v.$
Substituting into Eq.~\eqref{eq:POREC_Sxy}, we obtain
\begin{equation}\label{eq:POREC_final_F}
P_{\mathrm{POREC}}^{\mathrm{Q}}(d=2)
=
\frac{2}{3}-\frac{1}{12}F,
\end{equation}
where
\begin{equation}\label{eq:F}
F=
\sum_{x_1 =0}^{2} t_{x_1 |1}(\vec a_{x_1}\cdot\hat m_{x_1 |1})
+
\sum_{x_2 =0}^{2} t_{x_2 |2}(\vec b_{x_2}\cdot\hat m_{x_2 |2}).    
\end{equation}

From Eq.~\eqref{eq:POREC_final_F}, maximizing $P_{\mathrm{POREC}}^{\mathrm{Q}}$ is equivalent to minimizing $F$.
\smallskip

\begin{proposition}\label{Proposition1}
At the optimum, up to relabeling of outcomes, the preparation Bloch vectors can be chosen as
\begin{equation}
\vec{a}_{x_1}=-\lambda\,t_{x_1|1}\,\hat{m}_{x_1|1}, 
\qquad
\vec{b}_{x_2}=-\mu\,t_{x_2|2}\,\hat{m}_{x_2|2},
\end{equation}
for some constants $\lambda,\mu>0$.
\end{proposition}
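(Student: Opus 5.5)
We fix the measurements and treat $F$ in Eq.~\eqref{eq:F} as a linear functional of the preparation data. The aim is to show that a minimizer can always be replaced by one of the stated anti-aligned form without increasing $F$.

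For fixed POVMs, $F$ decouples into $F_1=\sum_{x_1}\vec a_{x_1}\cdot(t_{x_1|1}\hat m_{x_1|1})$ and $F_2=\sum_{x_2}\vec b_{x_2}\cdot(t_{x_2|2}\hat m_{x_2|2})$. We write $\vec u_k:=t_{k|1}\hat m_{k|1}$ and $\vec v_k:=t_{k|2}\hat m_{k|2}$. By Eq.~\eqref{eq:Measurement_constraint} these satisfy $\sum_k\vec u_k=\sum_k\vec v_k=0$, so $\{-\lambda\vec u_k\}$ and $\{-\mu\vec v_k\}$ automatically obey the zero-sum gauge of Corollary~\ref{cor:zero_sum}. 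Moreover, any feasible pair of zero-sum triples gives, via Lemma~\ref{Lemma:additive} read in reverse, a parity-oblivious encoding. Feasibility therefore means only that the zero-sum conditions hold and $|\vec a_{x_1}+\vec b_{x_2}|\le1$ for all nine pairs.

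\emph{Step 1: reduction to the component along $\vec u,\vec v$.} We decompose each $\vec a_k$ into its projection onto the span of the $\vec u$'s and the rest; only the projection enters $F_1$. The key device is a symmetrization over the nine constraints. Averaging $|\vec a_{x_1}+\vec b_{x_2}|^2\le1$ over all $(x_1,x_2)$ and using the zero-sum gauge gives
\[
\tfrac13\sum_k|\vec a_k|^2+\tfrac13\sum_k|\vec b_k|^2\le1 .
\]
This averaged condition is a relaxation of the true constraint.

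\emph{Step 2: Cauchy--Schwarz on the relaxation.} Minimizing the linear functional $F_1+F_2$ subject to this single quadratic constraint is a Lagrange/Cauchy--Schwarz problem. Its unique minimizer is $\vec a_k=-\lambda\vec u_k$ and $\vec b_k=-\mu\vec v_k$ with $\lambda,\mu>0$. The ratio $\lambda:\mu$ is fixed by the Lagrange condition, and the overall scale by saturation. This yields exactly the claimed form, with $\lambda,\mu$ playing the role of the Lagrange multipliers.

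\emph{Step 3: the relaxation is tight.} We must show that the relaxed optimum is feasible for all nine pointwise constraints, possibly after relabeling outcomes and rotating. This is the main obstacle. Pointwise, $|\vec a_{x_1}+\vec b_{x_2}|^2=|\vec a_{x_1}|^2+|\vec b_{x_2}|^2+2\vec a_{x_1}\cdot\vec b_{x_2}$, and the cross terms can be positive.

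The first thing I would try is to show that, at the optimum over measurements as well, the two measurement planes can be chosen orthogonal, so that $\vec u_k\perp\vec v_l$. For the optimal extremal (projective, with one null outcome) POVMs, the vectors $\vec u$ and $\vec v$ each lie on a line; taking those lines orthogonal kills all cross terms. The pointwise constraints then reduce to $(\lambda t_*)^2+(\mu s_*)^2\le1$, matching the main text.

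If non-orthogonal configurations must also be covered, I would instead argue by convexity. The bound from the relaxation holds for every feasible point, since feasibility implies the averaged constraint. Hence the true minimum is at least the relaxed minimum. Exhibiting one feasible anti-aligned configuration attaining it, namely the orthogonal Pauli choice of the main text, shows that the anti-aligned form is optimal. The phrase ``up to relabeling'' then absorbs the freedom to permute outcomes so that the null outcome is paired with $\vec a_2$ and $\vec b_2$.
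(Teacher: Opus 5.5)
There is a genuine gap: Step~3 is false. The relaxation from Steps~1--2 is strictly loose at the optimum, so its minimizer is not a valid configuration. Take the optimal measurements $t_{0|y}=t_{1|y}=1$, $t_{2|y}=0$, with $\hat m_{0|1}=-\hat m_{1|1}=\hat z$ and $\hat m_{0|2}=-\hat m_{1|2}=\hat x$. For the anti-aligned ansatz your averaged constraint reads $\tfrac23(\lambda^2+\mu^2)\le 1$, so your Step~2 minimizer is $\lambda=\mu=\sqrt3/2$ with $F=-2\sqrt3$. But then $|\vec a_0+\vec b_0|^2=\lambda^2+\mu^2=3/2>1$, which is not a state. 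The pointwise constraints give $\lambda^2+\mu^2\le 1$: the null outcome forces $\vec a_2=\vec b_2=0$, and the five pairs involving index $2$ dilute the average. Optimizing over POVMs (with $t_{k|y}\le 1$, hence $\sum_k t_{k|y}^2\le 2$), your relaxation only yields $F\ge -2\sqrt3$, i.e.\ $P\le \tfrac23+\tfrac{\sqrt3}{6}\approx 0.955$. That exceeds even the NPA value $0.919$, whereas the true qubit optimum is $F=-2\sqrt2$.

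Consequently neither version of your Step~3 closes the argument. The fallback needs a feasible configuration attaining the relaxed bound, but the Pauli configuration attains $-2\sqrt2$, not $-2\sqrt3$. The ``orthogonal lines'' route only shows that a \emph{rescaled} anti-aligned configuration is feasible. Once you rescale, uniqueness of the relaxed minimizer says nothing about the true minimizer, so nothing excludes a non-anti-aligned preparation doing better. That route also presupposes that the optimal POVMs are two-outcome projective, which is part of Theorem~2 rather than an input to the proposition.

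The paper avoids relaxation altogether. It restricts to extremal POVMs and aligns each $\vec a_{x_1}$ anti-parallel to $\hat m_{x_1|1}$, writing $\vec a_{x_1}=-\lambda_{x_1}\hat m_{x_1|1}$. It then combines the zero-sum gauge $\sum_{x_1}\lambda_{x_1}\hat m_{x_1|1}=0$ with the fact that the coplanar, non-collinear Bloch directions of an extremal three-outcome qubit POVM have a one-dimensional space of linear dependencies. Comparing with $\sum_k t_{k|1}\hat m_{k|1}=0$ forces $\lambda_{x_1}=\lambda\,t_{x_1|1}$.

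If you want to keep a relaxation-plus-attainment argument, the relaxation must be adapted to the support of the measurement. In the two-outcome projective case, sum only the four constraints with $(x_1,x_2)\in\{0,1\}^2$ and apply the parallelogram identity. This gives $|\vec a_0-\vec a_1|^2+|\vec b_0-\vec b_1|^2\le 4$, which is tight and is attained by the anti-aligned Pauli configuration (Sec.~IV).
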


\begin{proof}
The objective function $F$ is affine in both the preparations and the POVM elements. Hence the optimum is attained at extremal points of the convex set of admissible strategies, so it suffices to consider extremal POVMs.

For fixed measurement directions, each contribution $t_{x|y}(\vec v\cdot \hat m_{x|y})$ is minimized when $\vec v$ is anti-parallel to $\hat m_{x|y}$. Therefore, without loss of optimality, the preparation vectors can be written as
\begin{equation}
\vec a_{x_1}=-\lambda_{x_1}\hat m_{x_1|1},\qquad
\vec b_{x_2}=-\mu_{x_2}\hat m_{x_2|2},
\end{equation}
with $\lambda_{x_1},\mu_{x_2}\ge 0$.

The constraints $\sum_{x_1}\vec a_{x_1}=0$ and $\sum_{x_2}\vec b_{x_2}=0$ imply
\begin{equation}
\sum_{x_1}\lambda_{x_1}\hat m_{x_1|1}=0,\qquad
\sum_{x_2}\mu_{x_2}\hat m_{x_2|2}=0.
\end{equation}

 For an extremal three-outcome qubit POVM, the effects are rank-1 and satisfy $\sum_k t_{k|y}\hat m_{k|y}=0$. This implies that the Bloch vectors $\{\hat m_{k|y}\}$ lie in a plane and are not collinear; hence the space of linear dependencies among them is one-dimensional.

Comparing with the POVM completeness relation
\begin{equation}
\sum_{x_1} t_{x_1|1}\hat m_{x_1|1}=0,
\end{equation}
it follows that the coefficients must be proportional, i.e.,
\begin{equation}
\lambda_{x_1}=\lambda\,t_{x_1|1}.
\end{equation}

An identical argument yields
\begin{equation}
\mu_{x_2}=\mu\,t_{x_2|2}.
\end{equation}
Since $\lambda=\mu=0$ gives $F=0$, which is strictly suboptimal, we conclude $\lambda,\mu>0$.

The preparation vectors must satisfy $|\vec a_{x_1}+\vec b_{x_2}|\le 1$ for all $(x_1,x_2)$. This couples the two sets of vectors and constrains the admissible measurement directions. The resulting geometric structure will be exploited in the proof of Theorem~2.
\end{proof}

\medskip
\noindent
\section{Proof of Theorem 2}
We prove that $|F|\le 2\sqrt{2}$ and that this bound is tight.
\begin{proof}
Recall
\begin{align*}
F = -(\lambda A + \mu B), \qquad
A = \sum_{x_1} t_{x_1|1}^2, \quad
B = \sum_{x_2} t_{x_2|2}^2,    
\end{align*}
so that $|F| = \lambda A + \mu B$.

\smallskip
Let $t_* = \max_{x_1} t_{x_1|1}$ and $s_* = \max_{x_2} t_{x_2|2}$. Since $\sum_{x_1} t_{x_1|1} = 2$ and $t_{x_1|1} \le t_*$,
\begin{align*}
A = \sum_{x_1} t_{x_1|1}^2 \le t_* \sum_{x_1} t_{x_1|1} = 2t_*,    
\end{align*}
with equality iff $t_{x_1|1} \in \{0, t_*\}$. Similarly $B \le 2s_*$. Hence
\begin{equation}\label{eq:F_bound}
|F| \le 2(\lambda t_* + \mu s_*).    
\end{equation}
Since $|F|$ is convex in the weights over a compact simplex, its maximum is attained at an extreme point, i.e., $t_{x|y}\in\{0,t_*\}$. It suffices to consider the saturating case.

\smallskip
Squaring the Bloch-sphere constraint $\lvert \vec a_{x_1}+\vec b_{x_2}\rvert \le 1$ and using Proposition~\ref{Proposition1}, we obtain Eq.~(\ref{eq:norm_constraint}).

\begin{equation}\label{eq:norm_constraint}
    (\lambda t_{x_1|1})^2 + (\mu t_{x_2|2})^2
+ 2\lambda t_{x_1|1}\mu t_{x_2|2}\, c_{x_1 x_2} \le 1,
\end{equation}

where $c_{x_1 x_2} = \hat m_{x_1|1} \cdot \hat m_{x_2|2}$.

\smallskip
Fix $x_2^*$ with $t_{x_2^*|2} = s_*$. From the measurement constraint Eq.~(\ref{eq:Measurement_constraint}),
\begin{align*}
\sum_{x_1} t_{x_1|1}\hat m_{x_1|1} = 0,
\end{align*}
so
\begin{align*}
\sum_{x_1} t_{x_1|1}\, c_{x_1 x_2^*} = 0.
\end{align*}
Under saturation, $t_{x_1|1}=t_*$ on the support
$S := \{x_1 : t_{x_1|1}>0\}$, hence
\begin{align*}
\sum_{x_1 \in S} c_{x_1 x_2^*} = 0.
\end{align*}
Therefore the coefficients cannot all be negative, and there exists
$x_1' \in S$ such that $c_{x_1' x_2^*} \ge 0$.

Evaluating Eq.~(\ref{eq:norm_constraint}) at $(x_1',x_2^*)$, where
$t_{x_1'|1}=t_*$ and $t_{x_2^*|2}=s_*$, gives
\begin{align*}
(\lambda t_*)^2 + (\mu s_*)^2
+ 2\lambda t_* \mu s_*\, c_{x_1' x_2^*} \le 1,
\end{align*}
which implies
\begin{align*}
(\lambda t_*)^2 + (\mu s_*)^2 \le 1.
\end{align*}

\begin{align*}
\sum_{x_1 \in S} c_{x_1 x_2^*}=0,\; \exists\, c_{x_1 x_2^*}<0
\;\Rightarrow\; \exists\, c_{x_1' x_2^*}>0
\;\Rightarrow\;
(\lambda t_*)^2+(\mu s_*)^2 < 1.
\end{align*}
Hence equality can hold only when $c_{x_1 x_2^*}=0$ on the support.

\smallskip

By Cauchy-Schwarz to Eq.~(\ref{eq:F_bound}),
\begin{align*}
\lambda t_* + \mu s_* \le \sqrt{2}\sqrt{(\lambda t_*)^2 + (\mu s_*)^2} \le \sqrt{2},    
\end{align*}

and therefore
\begin{align*}
|F| \le 2\sqrt{2}.    
\end{align*}

\end{proof}
\paragraph{Saturation of the bound.}
The bound $|F|\le 2\sqrt{2}$ is tight. Saturation of $A\le 2t_*$ and $B\le 2s_*$ requires
$t_{x_1|1}\in\{0,t_*\}$ and $t_{x_2|2}\in\{0,s_*\}$, so the weights are uniform on their supports; for qubits this corresponds to $t_{0|y}=t_{1|y}=1$ and $t_{2|y}=0$, i.e., only two outcomes contribute and the third is null. Together with Eq.~(\ref{eq:Measurement_constraint}), this enforces $(\lambda t_*)^2+(\mu s_*)^2=1$ and vanishing overlaps on the support, $c_{x_1 x_2}=0$. Hence the bound is achieved by two incompatible measurements with orthogonal Bloch directions, e.g., mutually orthogonal Pauli measurements. Saturation can also be realized with asymmetric choices (such as a suitably aligned trine and a projective POVM), but we focus on this symmetric projective construction.

\smallskip
From the zero-sum constraints, the third Bloch vectors are fixed as
\begin{align*}
\vec a_2 &= -\vec a_0 - \vec a_1 = 0, \\
\vec b_2 &= -\vec b_0 - \vec b_1 = 0.
\end{align*}
Consider the additive decomposition $\vec n_{x_1 x_2} = \vec a_{x_1} + \vec b_{x_2}$ and choose
\begin{align*}
\vec a_0 &= -\frac{\hat z}{\sqrt{2}}, \quad
\vec a_1 = \frac{\hat z}{\sqrt{2}}, \\
\vec b_0 &= -\frac{\hat x}{\sqrt{2}}, \quad
\vec b_1 = \frac{\hat x}{\sqrt{2}}.
\end{align*}

The corresponding three-outcome POVMs are
\begin{align*}
M_{0|1} &= \tfrac{1}{2}(\mathbb{I} + \vec{\sigma}\cdot \hat z), \quad
M_{1|1} = \tfrac{1}{2}(\mathbb{I} - \vec{\sigma}\cdot \hat z), \quad
M_{2|1} = 0, \\
M_{0|2} &= \tfrac{1}{2}(\mathbb{I} + \vec{\sigma}\cdot \hat x), \quad
M_{1|2} = \tfrac{1}{2}(\mathbb{I} - \vec{\sigma}\cdot \hat x), \quad
M_{2|2} = 0,
\end{align*}
which satisfy $\sum_b M_{b|y} = \mathbb{I}$. The Bloch directions are
\begin{align*}
\hat m_{0|1} &= \hat z,\quad \hat m_{1|1}=-\hat z, \\
\hat m_{0|2} &= \hat x,\quad \hat m_{1|2}=-\hat x.
\end{align*}

Substituting into
\begin{align*}
F &= \sum_{x_1} t_{x_1|1}\,\vec a_{x_1}\cdot \hat m_{x_1|1}
+ \sum_{x_2} t_{x_2|2}\,\vec b_{x_2}\cdot \hat m_{x_2|2},
\end{align*}
with $t_{0|y}=t_{1|y}=1$, we obtain
\begin{align*}
F &= -\frac{1}{\sqrt{2}} - \frac{1}{\sqrt{2}}
     -\frac{1}{\sqrt{2}} - \frac{1}{\sqrt{2}}
   = -2\sqrt{2}.
\end{align*}

Substituting into
\begin{align*}
P_{\mathrm{POREC}}^{\mathrm{Q}}(d=2)
&=\tfrac{2}{3}-\tfrac{1}{12}F
\end{align*}
gives
\begin{align*}
P_{\mathrm{POREC}}^{\mathrm{Q}}(d=2)
&= \frac{2}{3}+\frac{1}{3\sqrt{2}}.
\end{align*}

For the PORAC task, using Eq.~(\ref{eq:PORAC_success}), the same construction applies with parallel (rather than anti-parallel) alignment, which flips the sign of $F$. The bound is thus attained at $F=+2\sqrt{2}$, yielding
\begin{align*}
P_{\mathrm{PORAC}}^{\mathrm{Q}}(d=2)
&= \frac{1}{3}+\frac{1}{3\sqrt{2}}.
\end{align*}

\section{Sec IV: Proof of (Theorem 3), $(2,m)$ POREC qubit bound under two-outcome projective measurements}\label{app:Theorem3_proof}
As discussed in the main text, Alice receives an input
$x=(x_1,x_2)\in \mathbb{Z}_m^2$, and Bob receives a choice
$y\in\{1,2\}$. In the PORAC task, Bob must output $b=x_y$, whereas in the POREC task he must output any symbol different from $x_y$.

The corresponding average success probability for qubit strategies is
\begin{equation}\label{eq:POREC_def}
P_{\mathrm{POREC}}^{\mathrm{Q}}(d=2)
=
\frac{1}{2m^2}
\sum_{x\in\mathbb{Z}_m^2}
\sum_{y=1}^2
P(b\neq x_y\mid x,y).
\end{equation}

Using the Born rule $P(b=x_y\mid x,y)=\operatorname{Tr}(\rho_x M_{x_y|y})$ and noting that exclusion succeeds iff $b\neq x_y$, Eq.~\eqref{eq:POREC_def} becomes
\begin{align*}
P_{\mathrm{POREC}}^{\mathrm{Q}}(d=2)
&=
1-\frac{1}{2m^2}
\sum_{x_1,x_2}
\Big[
\operatorname{Tr}(\rho_{x_1x_2}M_{x_1|1})
+
\operatorname{Tr}(\rho_{x_1x_2}M_{x_2|2})
\Big].
\end{align*}

Define the marginals
\begin{equation}\label{eq:marginals}
R_{x_1}:=\sum_{x_2}\rho_{x_1x_2}, \qquad
C_{x_2}:=\sum_{x_1}\rho_{x_1x_2}.
\end{equation}
Reordering the sums yields
\begin{align*}
P_{\mathrm{POREC}}^{\mathrm{Q}}(d=2)
&=
1-\frac{1}{2m^2}
\left[
\sum_{x_1}\operatorname{Tr}(R_{x_1}M_{x_1|1})
+
\sum_{x_2}\operatorname{Tr}(C_{x_2}M_{x_2|2})
\right].
\end{align*}

Using POVM completeness,
\begin{equation}\label{eq:povm_complete}
M_{m-1|y}=I-\sum_{k=0}^{m-2}M_{k|y},
\end{equation}
we obtain
\begin{align*}
P_{\mathrm{POREC}}^{\mathrm{Q}}(d=2)
&=
\frac{m-1}{m}
-
\frac{1}{2m^2}
\Bigg[
\sum_{x_1=0}^{m-2}
\operatorname{Tr}\!\left(M_{x_1|1}(R_{x_1}-R_{m-1})\right) \\
&\qquad\qquad
+
\sum_{x_2=0}^{m-2}
\operatorname{Tr}\!\left(M_{x_2|2}(C_{x_2}-C_{m-1})\right)
\Bigg].
\end{align*}

For $n=2$ and prime $m$, parity-obliviousness enforces an additive Bloch decomposition (see Sec.~II), giving
\begin{equation}\label{eq:bloch_diff}
R_{x_1}-R_{m-1}
=
\frac{m}{2}(\vec a_{x_1}-\vec a_{m-1})\cdot\vec\sigma,
\quad
C_{x_2}-C_{m-1}
=
\frac{m}{2}(\vec b_{x_2}-\vec b_{m-1})\cdot\vec\sigma.
\end{equation}

Substituting Eq.~\eqref{eq:bloch_diff},
\begin{align*}
P_{\mathrm{POREC}}^{\mathrm{Q}}(d=2)
&=
\frac{m-1}{m}
-
\frac{1}{4m}
\Bigg[
\sum_{x_1=0}^{m-2}
\operatorname{Tr}\!\left(M_{x_1|1}(\vec a_{x_1}-\vec a_{m-1})\cdot\vec\sigma\right) \\
&\qquad\qquad
+
\sum_{x_2=0}^{m-2}
\operatorname{Tr}\!\left(M_{x_2|2}(\vec b_{x_2}-\vec b_{m-1})\cdot\vec\sigma\right)
\Bigg].
\end{align*}

Specializing to two-outcome projective measurements,
\begin{equation}\label{eq:projective}
M_{0|y}=\tfrac{1}{2}(I+\hat m_y\cdot\vec\sigma), \quad
M_{1|y}=\tfrac{1}{2}(I-\hat m_y\cdot\vec\sigma), \quad
M_{i|y}=0 \;\; (i\ge 2),
\end{equation}
the expression simplifies to
\begin{equation}\label{eq:reduced_form}
P_{\mathrm{POREC}}^{Q}(d=2)
=
\frac{m-1}{m}
-\frac{1}{4m}
\Big[
\hat m_1\cdot(\vec a_0-\vec a_1)
+
\hat m_2\cdot(\vec b_0-\vec b_1)
\Big].
\end{equation}

From Eq.~\eqref{eq:reduced_form}, maximizing $P_{\mathrm{POREC}}^{Q}$ is equivalent to minimizing
\begin{equation}\label{eq:F_def}
F :=
\hat m_1\cdot(\vec a_0-\vec a_1)
+
\hat m_2\cdot(\vec b_0-\vec b_1).
\end{equation}
From Eq.~\eqref{eq:F_def}, define
\begin{align*}
\tilde{A} := |\vec a_0 - \vec a_1|, \qquad
\tilde{B} := |\vec b_0 - \vec b_1|.
\end{align*}
Using $|\hat m_1|=|\hat m_2|=1$, Eq.~\eqref{eq:F_def} implies $F \ge -\bigl(\tilde{A} + \tilde{B}\bigr),$ with equality when $\hat m_1$ and $\hat m_2$ are anti-parallel to the respective differences.

To bound $\tilde{A}^2 + \tilde{B}^2$, we use the Bloch-sphere constraint
$|\vec a_{x_1}+\vec b_{x_2}|\le 1$, which holds for all $(x_1,x_2)$ and in particular for $(x_1,x_2)\in\{0,1\}^2$:
\begin{align*}
&|\vec a_0+\vec b_0|^2
+|\vec a_0+\vec b_1|^2
+|\vec a_1+\vec b_0|^2
+|\vec a_1+\vec b_1|^2 \le 4.
\end{align*}
Applying the parallelogram identity twice gives
\begin{align*}
&4\left|
\frac{\vec a_0+\vec a_1+\vec b_0+\vec b_1}{2}
\right|^2
+
|\vec a_0-\vec a_1|^2
+
|\vec b_0-\vec b_1|^2 \le 4.
\end{align*}
Dropping the non-negative first term yields
\begin{equation}\label{eq:Atilde_bound}
\tilde{A}^2 + \tilde{B}^2 \le 4.
\end{equation}

By Cauchy-Schwarz and Eq.~\eqref{eq:Atilde_bound},
\begin{align*}
\tilde{A} + \tilde{B}
\le \sqrt{2(\tilde{A}^2 + \tilde{B}^2)}
\le 2\sqrt{2},
\end{align*}
and hence $F \ge -2\sqrt{2}.
$

Substituting this into Eq.~\eqref{eq:reduced_form} gives
\begin{align*}
P_{\mathrm{POREC}}^{Q, \mathrm{Proj}}(d=2)
\le
\frac{m-1}{m}
+
\frac{1}{m\sqrt{2}}.
\end{align*}

The bound is saturated by binary projective measurements along orthogonal Bloch directions, e.g.\ $\hat m_1=\hat z$ and $\hat m_2=\hat x$, with $\vec a_0=-\hat z/\sqrt{2}$, $\vec a_1=\hat z/\sqrt{2}$, $\vec b_0=-\hat x/\sqrt{2}$, and $\vec b_1=\hat x/\sqrt{2}$.
\end{document}